\newtheorem{theorem}{\textbf{Theorem}}
\begin{document}
\begin{frontmatter}

\title{Time-Division Based Integrated Sensing, Communication, and Computing in Integrated Satellite-Terrestrial Networks}
\author[a]{Xiangming~Zhu}
\ead{zhuxm@zhejianglab.com}
\cortext[cor]{Corresponding author.}
\corref{cor}

\author[a]{Hua~Wang}
%\ead{hua.wang@zhejianglab.com}

\author[a,b]{Zhaohui~Yang}
%\ead{yangzhaohui@zhejianglab.com}

\author[c]{Quoc-Viet~Pham}
%\ead{viet.pham@tcd.ie}

\address[a]{The Research Institute of Intelligent Networks, Zhejiang Lab, Hangzhou 311121, China}
\address[b]{The College of Information Science and Electronic Engineering, Zhejiang University, Hangzhou 310027, China}
\address[c]{The School of Computer Science and Statistics, Trinity College Dublin, University of Dublin, Dublin 2, D02 PN40, Ireland}

\vspace{-0mm}

\begin{abstract}
In this paper, we investigate time-division based framework for integrated sensing, communication, and computing in integrated satellite-terrestrial networks.
We consider a scenario, where Internet-of-Things devices on the ground operate with sensing and communication in a time-division manner, and can process the sensing results locally, at the edge, or in the cloud via the satellite communication link. Based on the proposed framework, we formulate a multi-dimensional optimization problem to maximize the utility performance of sensing, communication, and computing abilities.
After decomposing the original optimization problem into two subproblems, we first derive the closed-form solution of the optimal task partitioning strategy for terrestrial users and satellite users.
Then, we develop the joint subframe allocation and task partitioning strategy to optimize the overall performance, by means of which the Pareto optimal solutions can be obtained along the Pareto frontier.
Extensive simulations are provided to demonstrated the effectiveness of the proposed strategy, which is 10\% to 60\% superior compared with the benchmarks. Also, the trade-off between the multidimensional resource and multi-functional performance is analyzed from the perspective of network design.

\end{abstract}

\begin{keyword}
Satellite-terrestrial networks, integrated sensing and communication, multi-access edge computing.
\end{keyword}

\end{frontmatter}

\section{Introduction}\label{sec:1}

With the rapid development of information technology and devices, the traditional communication network is now evolving into an intelligent system \cite{Liu2022Path202,Chaccour2022less,Chen2023big}. The next-generation wireless communication network (6G) is expected to play an important role for large numbers of emerging applications, among which high-accuracy sensing capability is considered as one of the vital enablers \cite{Zhao2023semantic}, especially for the dramatic growth of the Internet-of-Things (IoT) \cite{Ma2020Sensing1222}.
Up to now, sensing and communication systems are generally developed in parallel with little integration. However, when it turns to the 6G network, the consistent development trend of sensing and communication systems, e.g. high spectrum and large-scale antenna arrays, provides a promising chance to integrate the two fundamental functionalities into one system \cite{Liu2022Integrated1728}. From this perspective, a lot of research efforts have been made for the integrated sensing and communication (ISAC) technology both in academia and industry. With the aim of integrated designing and full cooperation of the sensing and communication functionalities, the network resources can be better utilized and optimized, achieving lower hardware complexity and higher spectrum efficiency \cite{Martone2021View,Chaccour2022Seven967}.
Therefore, the ISAC technology is envisioned to bring a paradigm shift in the 6G network. Numerous new application scenarios have been foreseen, such as vehicular network, smart city, factory automation, and other environment-aware applications \cite{Liu2022Survey994}.

In the 6G era, the emerging intelligent services impose higher requirements for the sensing capability of the network \cite{Wei2022Learning9948}. More attentions have been focused on the processing of results instead of the original data itself \cite{Feng2021Joint34,Xu2023Edge9}. However, the limited computation resources of sensing terminals may hinder the potential use of advanced processing technologies, such as  deep learning algorithms and data mining, which are generally of high computational complexity \cite{Xiao2021UAV5933,Yang2020energy,Liu2023Toward158}.
Fortunately, by migrating part of the workload to the edge nodes, the multi-access edge computing (MEC) technology provides a promising solution for computation-intensive services, and then reduces the data processing delay for high performance sensing applications \cite{Jin2022survey,Chen2023IRS349,Yang2023rs}.
The integrated sensing, communication, and computing (ISCC) framework is thus motivated as an exciting research theme for the 6G network \cite{Qi2021Integrated332,Qi2022Integrating6212}.
In \cite{Zhao2022Radio8675}, the authors proposed a general design framework of ISCC, in which IoT devices implemented sensing and communication simultaneously with orthogonal frequency-division multiplexing (OFDM) waveforms, and the sensing results were offloaded to the base station (BS) for edge processing. The trade-off of sensing, communication, and computing performance was investigated based on the proposed subchannel allocation algorithm.
In \cite{Ding2022Joint2085}, a two-tier computing architecture was proposed for the ISCC system. The sensing results of user terminals can be processed locally, or offloaded to the BS. The network sensing performance and energy consumption were jointly optimized based on the proposed precoding design and the resource allocation schemes.
In \cite{Wang2023NOMA574}, both the edge computing and the cloud computing were integrated in the proposed ISCC architecture, for which the partial offloading mode was applied to enhance the computation efficiency.
In \cite{Xu2023UAV}, an unmanned aerial vehicle (UAV) based ISCC architecture was proposed. The UAV was considered to probe the sensing target on the ground with the sensing beam, while the sensing results were offloaded to the edge computing nodes with the offloading beam. The Pareto boundary of the sensing performance and the computing performance was analyzed based on the semidefinite programming method.
In \cite{Liu2022Energy1337,Liu2023Energy413}, the application of ISCC in vehicle-to-everything networks was investigated. By jointly optimizing the offloading strategy and the resource allocation, the environment information was obtained with high precision and low delay, providing reliable guarantee for the driving safety.

Currently, researches of the ISCC technology are mainly concentrated on the terrestrial network. Although great advancements have been experienced for the conventional terrestrial network in the past decades, it cannot satisfy the increasing global communication demand when it turns into the 6G era \cite{Wang20236G}. Due to the inherent limitations of terrestrial infrastructures, large numbers of populations, machines, sensors, and many other things still remain unconnected \cite{Yaacoub2020Key533}.
By providing extended coverage, the satellite network emerges as an optional approach to reinforce the connectivity in remote areas \cite{Hommsi20222Next18,Wang2023Satellite}. Also, the wide coverage of the satellite provides great potential for ISCC capabilities in wide areas \cite{Shi2023Intelligent147}. In \cite{You2022Beam2994}, the integrated sensing and communication was investigated in LEO satellite networks. By proposing a hybrid beam precoding algorithm, the sensing performance and the communication performance can be guaranteed simultaneously with relatively high efficiency.
On the other hand, it should be pointed out that conventional terrestrial networks have been investigated in depth for providing high speed services when covering densely populated areas. Thus, by combining the advantages of both satellite and terrestrial networks, the integrated satellite-terrestrial network architecture is proposed to facilitate ubiquitous and flexible services for the next-generation network \cite{Zhu2022Creating154,Ji2020Energy2265}. Plenty of works have focused on the integrated satellite-terrestrial network architecture from simple integration to deep cooperation \cite{Niephaus2016QoS2415,Xiao2022Mobility121}. In the White Paper of the 6G wireless network, it has also been proposed that the future wireless network must be able to seamlessly interface with terrestrial and satellite networks \cite{Latva2019Key}. Consequently, the application of ISCC to the integrated satellite-terrestrial network is an important issue and a comprehensive study is urgently needed.
In \cite{Zhao2023Integrated661}, the authors investigated the application of integrated sensing and communication in the integrated satellite-terrestrial network.
A dynamic resource allocation scheme was proposed for the sensing and communication phases to optimize the network throughput.

As discussed above, the research of the ISCC technology in the integrated satellite-terrestrial network is still in the early stage. Especially, the long propagation delay of satellite links leads to distinct network properties in the integrated satellite-terrestrial network, for which existing schemes for terrestrial networks cannot be directly applied. Also, the competition among terrestrial and satellite users adds to the complexity for resource management. Novel ISCC technologies are in great demand to fully exploit the cooperation of different network components in the integrated satellite-terrestrial network.
In this paper, we investigate the time-division based integrated sensing, communication, and computing in the integrated satellite-terrestrial network based on the ISCC architecture. The main contributions of this paper are summarized as follows.
\begin{itemize}
\item We propose a novel ISCC framework for the integrated satellite-terrestrial network, in which IoT devices on the ground operate with sensing and communication in a time-division manner, and can process the sensing results locally, at the edge, or in the cloud via the satellite  communication link. Based on the proposed framework, we formulate the joint subframe allocation and task partitioning problem to simultaneously optimize the sensing performance and minimize the processing delay of the sensing results.

\item We derive the optimal task partitioning strategy for both terrestrial users and satellite users. Based on the partial offloading model, the original optimization problem is decomposed into the terrestrial task partitioning subproblem and the satellite task partitioning subproblem. The closed-form solutions have been obtained for both subproblems with theoretical derivations.

\item We develop the joint subframe allocation and task partitioning strategy to optimize the overall performance of the network. Based on the coordination of different network components, the sensing, communication, and computing can be efficiently integrated in the proposed network architecture. Extensive simulations demonstrate the effectiveness of the proposed strategy, which is 10\% to 60\% superior compared with the benchmarks.

\end{itemize}

The rest of the paper is organized as follows. Section \ref{sec:2} describes the system model, and Section \ref{sec:3} discusses the problem formulation. In Section \ref{sec:4} and Section \ref{sec:5}, the optimal task partitioning strategies of terrestrial users and satellite users are derived, respectively. Then, the joint subframe allocation and task partitioning strategy is developed in Section \ref{sec:6}. Simulations and analyses are provided in Section \ref{sec:7}. Finally, conclusions are drawn in Section \ref{sec:8}.

\section{System Model} \label{sec:2}

\subsection{Network Model}\label{sec:2_1}
\begin{figure}[t]
\begin{center}
\includegraphics[width=0.95\linewidth]{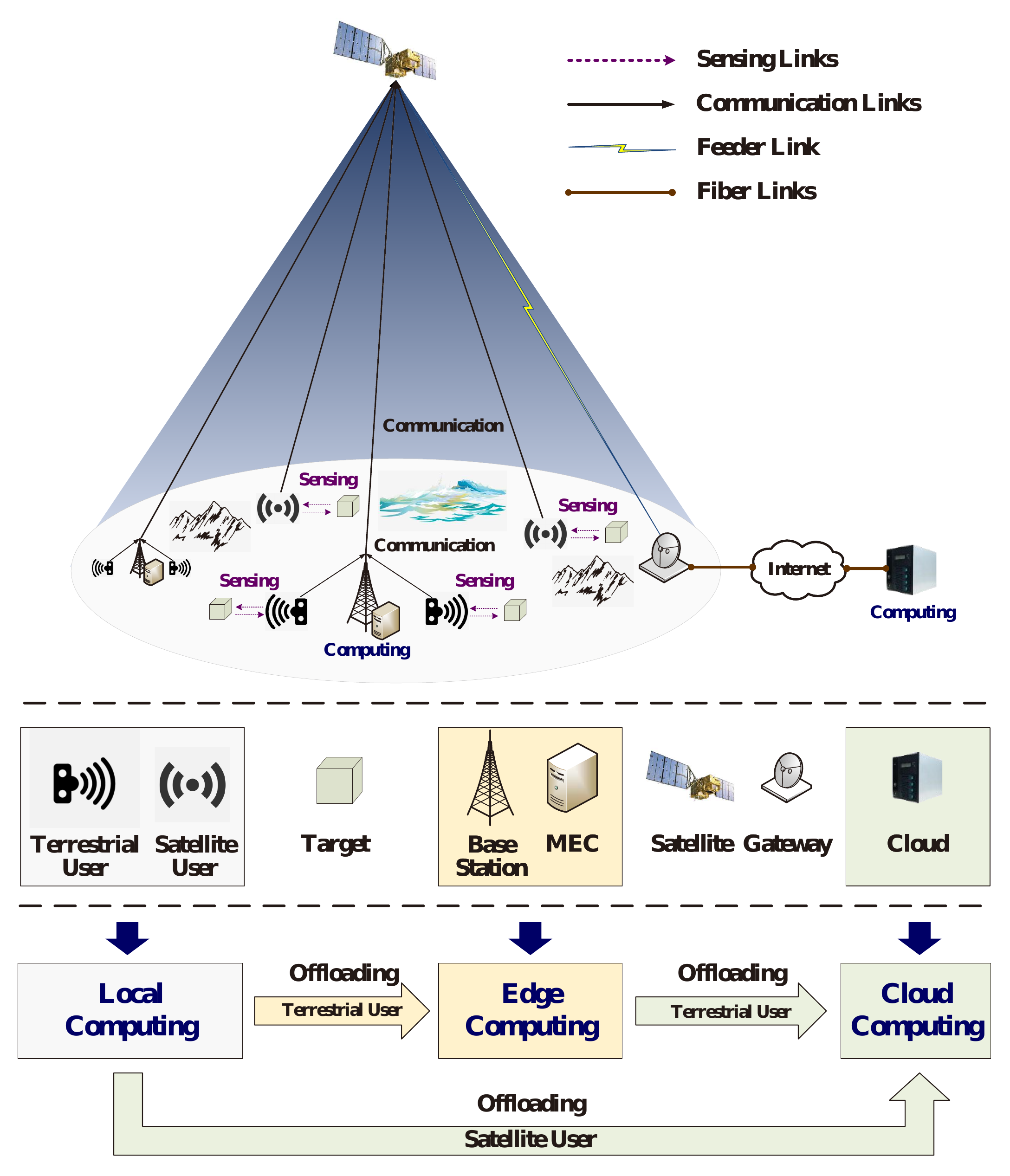}
\end{center}
\vspace{-5mm}
\captionsetup{font={footnotesize}}
\caption{The integrated satellite-terrestrial network with integrated sensing, communication, and computing.}
\label{fig:system}
\vspace{-5mm}
\end{figure}

The considered satellite-terrestrial network consists of one satellite, $N$ terrestrial BSs, and a set of IoT devices, as shown in Fig. \ref{fig:system}. The IoT devices are considered to be located in remote areas, and perform environment sensing for various environment-aware applications, such as weather prediction, marine monitoring, and pollution monitoring.
According to the communication link, IoT devices on the ground are divided into two types, i.e., users that use terrestrial networks (called TUEs hereinafter) and users that use satellite communications (called SUEs hereinafter). A set of $K_n^b$ TUEs that connect to BS $n$ is represented by ${\cal U}_n^b = \{ u_{n,1}^b,...,u_{n,k}^b,...,u_{n,K_n^b}^b\}$. The total number of TUEs is ${K^\textrm{B}}{\rm{ = }}\sum\limits_{n = 1}^N {K_n^b} $. A set of $K^\textrm{S}$ SUEs is represented by ${{\cal U}^\textrm{S}} = \{ u_1^s,...,u_k^s,...,u_{{K^\textrm{S}}}^s\}$. The total number of users in the network is $K = K^\textrm{B} + K^\textrm{S}$.

We adopt the satellite-terrestrial backhaul network
architecture, where the satellite provides the backhaul transmission for terrestrial BSs in remote areas without connection of optical fiber  \cite{Zhu2022Integrated437}. The satellite can be either low earth orbit (LEO) satellite, medium earth orbit (MEO) satellite, or geosynchronous earth orbit (GEO) satellite according to the actual network composition, which are of different transmission delay and service capability.
Equipped with an MEC server, each BS can provide edge computing services for its connected TUEs, while the cloud server can be accessed through the satellite for both TUEs and SUEs.
Both TUEs and SUEs implement integrated sensing and communication to improve the spectrum efficiency, and can process part of the sensing results with their own computation resources. Since the computing capability of IoT devices is generally limited, the computation tasks can be further offloaded to the BSs for edge computing, or offloaded to the cloud for cloud computing.

\vspace{-3mm}

\subsection{Sensing Model}\label{sec:2_2}
\subsubsection{TUE}

TUEs in the network implement sensing and communication in a time-division manner \cite{Cui2021Integrating158,Zhang2022Time2206}.
As shown in Fig. \ref{fig:frameBS}, the transmission frame of BS $n$ consists of sensing and communication subframes.

The set of TUEs ${\cal U}_n^b$ implement sensing during the sensing subframe $\theta _n^{{u_b},rad}{T_1} \in (0,T_1)$, in which $T_1$ is the length of a frame of the BS, and $\theta _n^{{u_b},rad}$ is the time fraction of TUEs for sensing. As discussed above, the BSs are isolated with each other in remote areas. Then, there will only be intra-BS interference for sensing of TUEs. Considering both the reflection and the refraction paths, the radar SINR of TUE $k$ can be obtained as follows \cite{Zhang2021Design1500}:
\begin{align}\label{eq:radar_SINR_BS_user}
\gamma _{n,k}^{{u_b},rad} = \frac{{g_{n,k,k}^{{u_b},rad}p_{n,k}^{{u_b}}}}{{\left(\sum\limits_{l = 1,l \ne k}^{K_n^b} {g_{n,l,k}^{{u_b},rad}p_{n,l}^{{u_b}}} \right) + {B_1}{N_0}}},
\end{align}
where $g_{n,k,k}^{{u_b},rad}$ is the propagation gain for the user $k$-target-user $k$ path, ${p_{n,k}^{{u_b}}}$ is the transmission power of TUE $k$, ${g_{n,l,k}^{{u_b},rad}}$ is the propagation gain for the user $l$-target-user $k$ path, ${p_{n,l}^{{u_b}}}$ is the transmission power of TUE $l$, $B_1$ is the bandwidth of the BS, and $N_0$ is the noise power spectral density.

\begin{figure}[t]
\begin{center}
\includegraphics[width=1\linewidth]{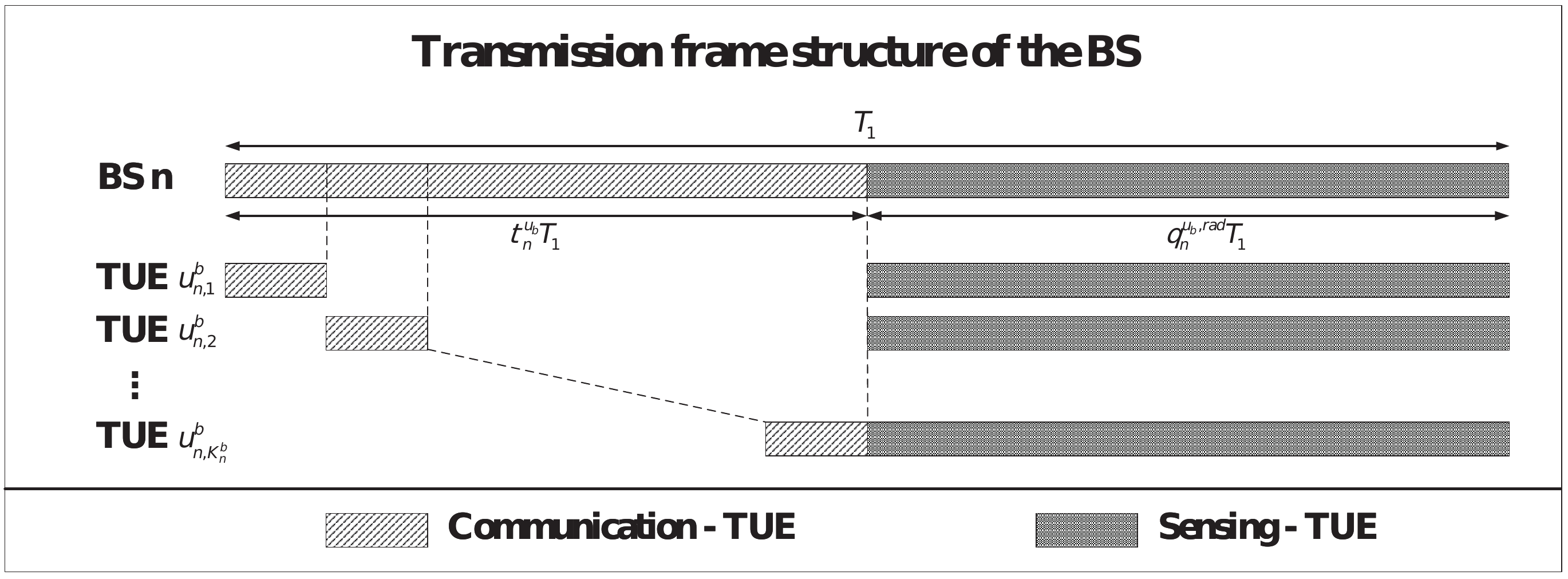}
\end{center}
\vspace{-5mm}
\captionsetup{font={footnotesize}}
\caption{The transmission frame structure of the BS.}
\label{fig:frameBS}
\vspace{-5mm}
\end{figure}

We use the radar mutual information (MI) to characterize the sensing performance of users, which is widely applied in ISAC and ISCC systems \cite{Zhang2022Enabling306,Ni2022Multi1276}. For TUE $k$, the sensing performance can be obtained by calculating the radar MI as follows:
\begin{align}\label{eq:MI_BS_user}
I_{n,k}^{{u_b}}(y_{n,k}^{{u_b},rad};g_{n,k,k}^{{u_b},rad}) = \theta _n^{{u_b},rad}{T_1}{B_1}{\log _2}(1 + \gamma _{n,k}^{{u_b},rad}),
\end{align}
where $y_{n,k}^{{u_b},rad}$ is the received sensing signal.

\subsubsection{SUE}

SUEs in the network implement sensing and communication in a time-division manner, as well as the uplink communication from the BSs to the satellite. As shown in Fig. \ref{fig:frameSatellite}, the transmission frame of the satellite consists of sensing and communication subframes.

Similarly to TUEs, the set of SUEs ${{\cal U}^\textrm{S}}$ implement sensing during the subframe ${\theta ^{{u_s},rad}}{T_2} \in (0,{T_2})$, in which $T_2$ is the length of a frame of the satellite, and ${\theta ^{{u_s},rad}}$ is the time fraction of SUEs for sensing. Also, we consider that SUEs in the network are isolated with each other. Then, there will be no interference for sensing of SUEs. The radar SINR of SUE $k$ can be obtained as follows:
\begin{align}\label{eq:radar_SINR_satellite_user}
\gamma _k^{{u_s},rad} = \frac{{g_{k,k}^{{u_s},rad}p_k^{{u_s}}}}{{{B_2}{N_0}}},
\end{align}
where $g_{k,k}^{{u_s},rad}$ is the propagation gain for the user $k$-target-user $k$ path, ${p_k^{{u_s}}}$ is the transmission power of SUE $k$, and $B_2$ is the bandwidth of the satellite.

\begin{figure}[t]
\begin{center}
\includegraphics[width=1\linewidth]{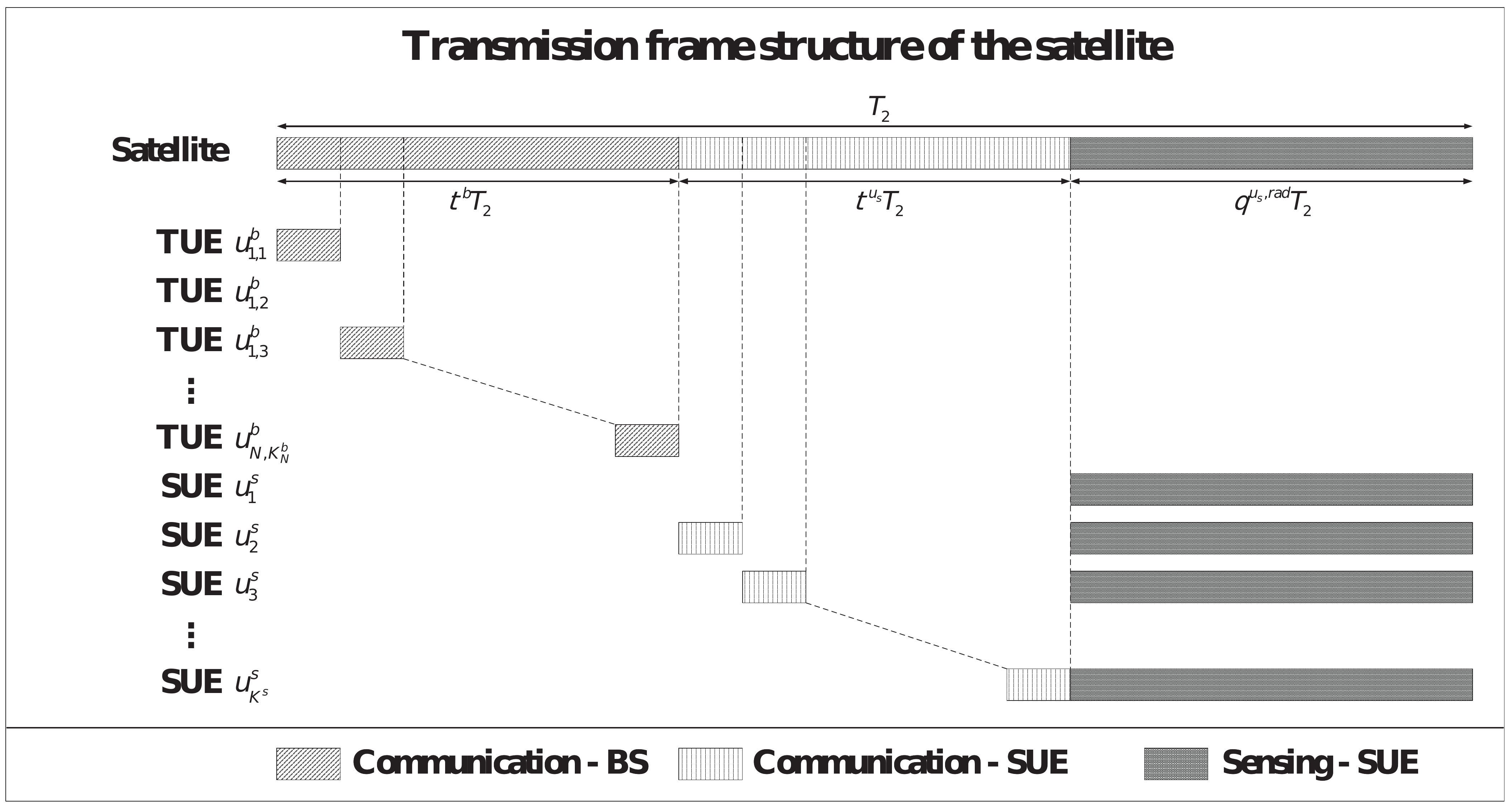}
\end{center}
\vspace{-5mm}
\captionsetup{font={footnotesize}}
\caption{The transmission frame structure of the satellite.}
\label{fig:frameSatellite}
\vspace{-5mm}
\end{figure}

Then, the sensing performance of SUE $k$ can be obtained by calculating the radar MI as follows:
\begin{align}\label{eq:MI_satellite_user}
I_k^{{u_s}}(y_k^{{u_s},rad};g_{k,k}^{{u_s},rad}) = {\theta ^{{u_s},rad}}{T_2}{B_2}{\log _2}(1 + \gamma _k^{{u_s},rad}),
\end{align}
where $y_k^{{u_s},rad}$ is the received sensing signal.

\subsection{Communication Model}\label{sec:2_3}

\subsubsection{TUE to BS}

As shown in Fig. \ref{fig:frameBS}, the duration of the communication subframe of TUEs is $\tau _n^{{u_b}}{T_1} \in (0,{T_1})$, in which $\tau _n^{{u_b}}$ is the time fraction of TUEs for communication, and we have $\tau _n^{{u_b}} + \theta _n^{{u_b},rad} = 1$. By applying the time-division multiple access (TDMA) method \cite{Lee2022Messaage34}, the communication subframe is equally divided among the set of TUEs ${\cal U}_n^b$ \cite{Fang2022Age1441,Mu2021Joint6648}. Then, the achievable transmission rate of TUE $k$ can be calculated by
\begin{align}\label{eq:rate_BS_user}
{r}_{n,k}^{{u_b}} = \frac{{\tau _n^{{u_b}}}}{{K_n^b}}{B_1}{\log _2}\left(1 + \frac{{G_{n,k}^{{u_b},t}G_n^{b,r}h_{n,k}^{{u_b}}p_{n,k}^{{u_b}}}}{{{B_1}{N_0}}}\right) = \frac{{\tau _n^{{u_b}}}}{{K_n^b}}R_{n,k}^{{u_b}},
\end{align}
where ${G_{n,k}^{{u_b},t}}$ is the transmitting antenna gain of TUE $k$, ${G_n^{b,r}}$ is the receiving antenna gain of BS $n$, and ${h_{n,k}^{{u_b}}}$ is the communication channel gain from TUE $k$ to BS $n$.

\subsubsection{BS to Satellite}
As shown in Fig. \ref{fig:frameSatellite}, the duration of the communication subframe of the BSs is $\tau^{b}{T_2} \in (0,{T_2})$, in which $\tau^{b}$ is the time fraction of BSs for communication. It is utilized to offload the computation tasks of TUEs to the cloud via the satellite. Let ${K^{{u_b},c}}$ be the number of TUEs whose computation tasks are offloaded to the cloud. By applying the TDMA method, the communication subframe of the BSs is equally divided for transmitting the computation tasks of the ${K^{{u_b},c}}$ TUEs. When offloading the computation task from BS $n$ to the satellite, the achievable transmission rate for the computation task of TUE $k$ can be calculated by
\begin{align}\label{eq:rate_BS}
r_{n,k}^b = \frac{{{\tau ^b}}}{{{K^{{u_b},c}}}}{B_2}{\log _2}\left(1 + \frac{{G_n^{b,t}G^{s,r}h_n^bp_n^b}}{{{B_2}{N_0}}}\right) = \frac{{{\tau ^b}}}{{{K^{{u_b},c}}}}R_n^b,
\end{align}
where ${G_n^{b,t}}$ is the transmitting antenna gain of BS $n$, ${G^{s,r}}$ is the receiving antenna gain of the satellite, ${h_n^b}$ is the communication channel gain from BS $n$ to the satellite, and ${p_n^b}$ is the transmission power of BS $n$.

\subsubsection{SUE to Satellite}
As shown in Fig. \ref{fig:frameSatellite}, the duration of the communication subframe of SUEs is $\tau^{u_s}{T_2} \in (0,{T_2})$, in which $\tau^{u_s}$ is the time fraction of SUEs for communication, and we have ${\tau ^b} + {\tau ^{{u_s}}} + {\theta ^{{u_s},rad}} = 1$. Let ${K^{{u_s},c}}$ be the number of SUEs whose computation tasks are offloaded to the cloud. By applying the TDMA method, the communication subframe of SUEs will be equally divided among the ${K^{{u_s},c}}$ SUEs. When offloading the computation task from SUE $k$ to the satellite, the achievable transmission rate can be calculated by
\begin{align}\label{eq:rate_satellite_user}
r_k^{{u_s}} = \frac{{{\tau ^{{u_s}}}}}{{{K^{{u_s},c}}}}{B_2}{\log _2}\left(1 + \frac{{G_k^{{u_s},t}{G^{s,r}}h_k^{{u_s}}p_k^{{u_s}}}}{{{B_2}{N_0}}}\right) = \frac{{{\tau ^{{u_s}}}}}{{{K^{{u_s},c}}}}R_k^{{u_s}},
\end{align}
where ${G_k^{{u_s},t}}$ is the transmitting antenna gain of SUE $k$, and ${h_k^{{u_s}}}$ is the communication channel gain from SUE $k$ to the satellite.

\subsubsection{Satellite to Cloud}
By means of high-gain directional antennas and large bandwidth at Ku/Ka band, high throughput can be achieved for the feeder link transmission between the satellite and the gateway \cite{Portillo2019Technical123}.
In this case, the transmission delay from the satellite to the gateway is negligibly small compared with the long propagation delay from the satellite to the cloud.
Thus we consider the propagation delay as the dominating factor when offloading the computation tasks from the satellite to the cloud, which will be further discussed in subsequent sections.

\subsection{Computation Model}

\subsubsection{Task Model}
We use a two-tuple model to characterize the computation task of users. For TUE $k$ of BS $n$, the computation task $Q_{n,k}^{{u_b}} = (D_{n,k}^{{u_b}},C_{n,k}^{{u_b}})$ consists of the data amount $D_{n,k}^{{u_b}}$ (in bits) and the computation workload $C_{n,k}^{{u_b}}$ (in CPU cycles/bit) \cite{Ren2019Collaborative5031}. Also, for SUE $k$, the computation task is denoted by $Q_k^{{u_s}} = (D_k^{{u_s}},C_k^{{u_s}})$.
Similar to existing works, the partial offloading mode is considered for the computing process  \cite{Wang2023NOMA574,Liu2020Post3170}. The computation task of users can be partitioned, and then executed at different locations in the network.

\subsubsection{Offloading Model - TUE}

As shown in Fig. \ref{fig:system}, the computing process of TUEs consists of the local computing, the edge computing at the BS, and the cloud computing. It should be noted that we do not consider satellite edge computing in this paper. Since the deployment and maintenance of the computing unit on the satellite are much more expensive than the terrestrial facility, it is inefficient to offload the computation task to the satellite compared with BS MEC servers. Thus the satellite is mainly utilized to provide connection with the cloud in the proposed network architecture.
For task $Q_{n,k}^{{u_b}}$, the computing process is given by:
\begin{itemize}
\item {\bf{Local Computing:}} The total amount of data executed at TUE $k$ is $\alpha _{n,k}^{{u_b}}D_{n,k}^{{u_b}}$, where $\alpha _{n,k}^{{u_b}} \in [0,1]$ is the task partitioning ratio for local computing. The local computing frequency of TUE $k$ is $f_{n,k}^{{u_b}}$ (in CPU cycles/s).

\item {\bf{Edge Computing:}} The total amount of data executed at BS $n$ is $\beta _{n,k}^{{u_b}}D_{n,k}^{{u_b}}$, where $\beta _{n,k}^{{u_b}} \in [0,1]$ is the task partitioning ratio for edge computing. We consider all TUEs will utilize edge computing to reduce the processing delay. The total computing frequency of BS $n$ is ${F^b}$, which will be equally allocated among the ${K_n^b}$ TUEs \cite{Zhao2022Radio8675,Ning2021Mobile463}.

\item {\bf{Cloud Computing:}} The total amount of data executed at the cloud is $\kappa _{n,k}^{{u_b}}D_{n,k}^{{u_b}}$, where $\kappa _{n,k}^{{u_b}} \in [0,1]$ is the task partitioning ratio for cloud computing, and we have $\alpha _{n,k}^{{u_b}} + \beta _{n,k}^{{u_b}} + \kappa _{n,k}^{{u_b}} = 1$.
    Then, the number of TUEs offloaded to the cloud can be expressed by
    \begin{align}\label{eq:cloud_number_BS_user}
    {K^{{u_b},c}} = \sum\limits_{n = 1}^N {\sum\limits_{k = 1}^{K_n^b} {\Gamma(\kappa _{n,k}^{{u_b}})} },
    \end{align}
    where $\Gamma(\kappa _{n,k}^{{u_b}})$ is the indicator function, given by
    \begin{align}\label{eq:indicator_function}
    \Gamma(\kappa _{n,k}^{{u_b}}) = \left\{ {\begin{array}{*{20}{c}} {1,\kappa _{n,k}^{{u_b}} > 0,}\\
    {0,\kappa _{n,k}^{{u_b}} = 0.}
    \end{array}} \right.
    \end{align}
    Generally, cloud servers are deployed with powerful computing resources. The computing delay at the cloud is considered to be negligibly small compared with the long propagation delay from the satellite to the cloud \cite{Guo2018Collaborative4514}.

\end{itemize}

\subsubsection{Offloading Model - SUE}

As shown in Fig. \ref{fig:system}, the computing process of SUEs consists of the local computing and the cloud computing. For task $Q_k^{{u_s}}$, the computing process is given by:
\begin{itemize}
\item {\bf{Local Computing:}} The total amount of data executed at SUE $k$ is $\alpha _k^{{u_s}}D_k^{{u_s}}$, where $\alpha _k^{{u_s}} \in [0,1]$ is the task partitioning ratio for local computing. The local computing frequency of SUE $k$ is $f_k^{{u_s}}$.

\item {\bf{Cloud Computing:}} The total amount of data executed at the cloud is $\kappa _k^{{u_s}}D_k^{{u_s}}$, where $\kappa _k^{{u_s}} \in [0,1]$ is the task partitioning ratio for cloud computing, and we have $\alpha _k^{{u_s}} + \kappa _k^{{u_s}} = 1$.
    Then, the number of SUEs offloaded to the cloud can be expressed by
    \begin{align}\label{eq:cloud_number_satellite_user}
    {K^{{u_s},c}} = \sum\limits_{k = 1}^{{K^s}} {\Gamma(\kappa _k^{{u_s}})}.
    \end{align}

\end{itemize}

\section{Problem Formulation}\label{sec:3}

\subsection{Delay Analysis - TUE}\label{sec:3_1}

\subsubsection{Local Computing}
For task $Q_{n,k}^{{u_b}}$, the total computation workload at the TUE is ${\alpha _{n,k}^{{u_b}}D_{n,k}^{{u_b}}C_{n,k}^{{u_b}}}$. Then, the local computing delay is given by
\begin{align}\label{eq:delay_local_computing_BS_user}
t_{n,k}^{{u_b},comp} = \frac{{\alpha _{n,k}^{{u_b}}D_{n,k}^{{u_b}}C_{n,k}^{{u_b}}}}{{f_{n,k}^{{u_b}}}}.
\end{align}

\subsubsection{Edge Computing}
For task $Q_{n,k}^{{u_b}}$, the total amount of data offloaded to BS $n$ is $(1 - \alpha _{n,k}^{{u_b}})D_{n,k}^{{u_b}}$. The uplink transmission delay is given by
\begin{align}\label{eq:transmission_delay_BS_user_BS}
t_{n,k}^{{u_b},tran} = \frac{{(1 - \alpha _{n,k}^{{u_b}})D_{n,k}^{{u_b}}}}{{r_{n,k}^{{u_b}}}} = \frac{{(1 - \alpha _{n,k}^{{u_b}})D_{n,k}^{{u_b}}K_n^b}}{{R_{n,k}^{{u_b}}\tau _n^{{u_b}}}}.
\end{align}
The computation workload at the BS is ${\beta _{n,k}^{{u_b}}D_{n,k}^{{u_b}}C_{n,k}^{{u_b}}}$. The edge computing delay is given by
\begin{align}\label{eq:delay_edge_computing_BS_user}
t_{n,k}^{b,comp} = \frac{{\beta _{n,k}^{{u_b}}D_{n,k}^{{u_b}}C_{n,k}^{{u_b}}K_n^b}}{{{F^b}}}.
\end{align}
With relatively small size, the download delay for the feedback of the computation results is generally considered to be negligible \cite{Wang2020Online803,Wang2022Deep}.

\subsubsection{Cloud Computing}
For task $Q_{n,k}^{{u_b}}$, the total amount of data offloaded to the satellite is $\kappa _{n,k}^{{u_b}}D_{n,k}^{{u_b}}$. The uplink transmission delay is given by
\begin{align}\label{eq:transmission_delay_BS_satellite}
t_{n,k}^{b,tran} = \frac{{\kappa _{n,k}^{{u_b}}D_{n,k}^{{u_b}}}}{{r_n^b}} = \frac{{\kappa _{n,k}^{{u_b}}D_{n,k}^{{u_b}}{K^{{u_b},c}}}}{{R_n^b{\tau ^b}}}.
\end{align}
Since the satellite is generally on the orbit of hundreds to tens of thousands of kilometers, the propagation delay of the offloading path should also be considered for cloud computing. The two-way propagation delay for computation offloading from BS $n$ to the cloud is given by
\begin{align}\label{eq:propagation_delay_BS_cloud}
t_n^{b,trip} = \frac{{2o_n^b}}{c} + \frac{{2{o^s}}}{c} + {t^{cloud}},
\end{align}
in which $o_n^b$ is the propagation path length for the BS $n$-satellite path, $o^s$ is the propagation path length for the satellite-gateway path, $c$ is the speed of light, and $t^{cloud}$ is the two-way propagation delay for the gateway-cloud path.

\subsubsection{Task Completion Delay}
Define ${t_{n,k}^{{u_b},u}}, {t_{n,k}^{{u_b},b}}, {t_{n,k}^{{u_b},c}}$ as the completion delays of the data executed at the TUE, at the BS, and at the cloud, which are given by
\begin{align}\label{eq:completion delay_data_location_BS_user}
&t_{n,k}^{{u_b},u} = t_{n,k}^{{u_b},comp},\\
&t_{n,k}^{{u_b},b} = t_{n,k}^{{u_b},tran} + t_{n,k}^{b,comp}, \nonumber \\
&t_{n,k}^{{u_b},c} = t_{n,k}^{{u_b},tran} + t_{n,k}^{b,tran} + t_n^{b,trip}. \nonumber
\end{align}
Then, the completion delay of task $Q_{n,k}^{{u_b}}$ can be calculated as follows \cite{Bi2022Energy1941}:
\begin{align}\label{eq:total_delay_BS_user}
&t_{n,k}^{{u_b},total}
= \left\{ {\begin{array}{*{20}{l}}
{\max \{ t_{n,k}^{{u_b},u},t_{n,k}^{{u_b},b}\} ,
\hspace{6.6mm} \hspace{5.7mm} \kappa _{n,k}^{{u_b}} = 0,}\\
{\max \{ t_{n,k}^{{u_b},u},t_{n,k}^{{u_b},b},t_{n,k}^{{u_b},c}\} ,
\hspace{5.7mm} \kappa _{n,k}^{{u_b}} > 0.}
\end{array}} \right.
\end{align}

\vspace{-5mm}

\subsection{Delay Analysis - SUE}
\vspace{-1mm}
\subsubsection{Local Computing}
For task $Q_k^{{u_s}}$, the computation workload at the SUE is $\alpha _k^{{u_s}}D_k^{{u_s}}$. Then, the local computing delay is given by
\begin{align}\label{eq:delay_local_computing_satellite_user}
t_k^{{u_s},comp} = \frac{{\alpha _k^{{u_s}}D_k^{{u_s}}C_k^{{u_s}}}}{{f_k^{{u_s}}}}.
\end{align}

\vspace{-3mm}
\subsubsection{Cloud Computing}
For task $Q_k^{{u_s}}$, the total amount of data offloaded to the satellite is ${\kappa _k^{{u_s}}D_k^{{u_s}}}$. The uplink transmission delay is given by
\begin{align}\label{eq:transmission_delay_satellite_user_satellite}
t_k^{{u_s},tran} = \frac{{\kappa _k^{{u_s}}D_k^{{u_s}}}}{{r_k^{{u_s}}}} = \frac{{\kappa _k^{{u_s}}D_k^{{u_s}}{K^{{u_s},c}}}}{{R_k^{{u_s}}{\tau ^{{u_s}}}}}.
\end{align}
Similarly, let ${o_k^{{u_s}}}$ be the propagation path length for the SUE $k$-satellite path. The two-way propagation delay from SUE $k$ to the cloud is given by
\begin{align}\label{eq:propagation_delay_satellite_user_cloud}
t_k^{{u_s},trip} = \frac{{2o_k^{{u_s}}}}{c} + \frac{{2{o^s}}}{c} + {t^{cloud}}.
\end{align}

\subsubsection{Task Completion Delay}
Define ${t_k^{{u_s},u}}$ and ${t_k^{{u_s},c}}$ as the completion delays of the data executed at the SUE and at the cloud, which are given by
\begin{align}\label{eq:completion delay_data_location_satellite_user}
&t_k^{{u_s},u} = t_k^{{u_s},comp},\\
&t_k^{{u_s},c} = t_k^{{u_s},tran} + t_k^{{u_s},trip} . \nonumber
\end{align}
Then, the completion delay of task $Q_k^{{u_s}}$ can be calculated as follows:
\begin{align}\label{eq:total_delay_satellite_user}
t_k^{{u_s},total} = \left\{ {\begin{array}{*{20}{l}}
{t_k^{{u_s},u},
\hspace{20.1mm} \kappa _k^{{u_s}} = 0,}\\
{\max \{ t_k^{{u_s},u},t_k^{{u_s},c}\} ,
\hspace{5.3mm} \kappa _k^{{u_s}} > 0.}
\end{array}} \right.
\end{align}

\subsection{Problem Formulation}

In the proposed integrated satellite-terrestrial network, the sensing performance is evaluated by the radar MI, which is determined by the duration of the sensing subframe. Also, the computing performance is evaluated by the task completion delay, which is determined by the duration of the communication subframe and the task partitioning ratio. Considering both the two performance functions, we introduce the Cobb-Douglas utility function to seek a trade-off between the radar MI and the task completion delay \cite{Ghosh2020Trade10914}, represented by
\begin{align}\label{eq:utlility_function}
{U^{total}} = \frac{{{{({I^{total}})}^{{\eta }}}}}{{{{({t^{total}})}^{{1- \eta}}}}} = \frac{{{{\left[\left(\sum\limits_{n = 1}^N {\sum\limits_{k = 1}^{K_n^b} {I_{n,k}^{{u_b}}} } \right) + \sum\limits_{k = 1}^{{K^s}} {I_k^{{u_s}}} \right]}^{{\eta }}}}}{{{{\left[\left(\sum\limits_{n = 1}^N {\sum\limits_{k = 1}^{K_n^b} {t_{n,k}^{{u_b},total}} } \right) + \sum\limits_{k = 1}^{{K^s}} {t_k^{{u_s},total}} \right]}^{{1- \eta}}}}},
\end{align}
where $I^{total}$ is the total radar MI, $t^{total}$ is the total task completion delay, and $\eta \in [0,1]$ is the weighting metric for characterizing the trade-off between the radar MI and the task completion delay. Larger value of $\eta$ means the radar MI is considered to be more important when optimizing the network performance, and vice versa. By adjusting the weighting metric $\eta$, the Pareto optimal solutions can be obtained along the Pareto frontier \cite{Censor1977Pareto41}.

Then, the optimization problem is formulated as
\begin{align}\label{eq:optimization_problem}
&{\cal P}:
\mathop {\max }\limits_{
\substack{
{{\bm{\tau }}^{{u_b}}},{{\bm{\theta }}^{{u_b},rad}},{\tau ^b},{\tau ^{{u_s}}},{\theta ^{{u_s},rad}},\\
{{\bm{\alpha }}^{{u_b}}},{{\bm{\beta }}^{{u_b}}},{{\bm{\kappa }}^{{u_b}}},{{\bm{\alpha }}^{{u_s}}},{{\bm{\kappa }}^{{u_s}}}}
} {U^{total}},\\
&\hspace{5.3mm} s.t.
\;\tau _n^{{u_b}} + \theta _n^{{u_b},rad} = 1,\tau _n^{{u_b}},\theta _n^{{u_b},rad} \ge 0,\forall n,
\tag{\ref{eq:optimization_problem}{a}} \label{eq:optimization_problem_a}\\
&\hspace{5.3mm}\;\;\;\;\;\; {\tau ^b} + {\tau ^{{u_s}}} + {\theta ^{{u_s},rad}} = 1,{\tau ^b},{\tau ^{{u_s}}},{\theta ^{{u_s},rad}} \ge 0,
\tag{\ref{eq:optimization_problem}{b}} \label{eq:optimization_problem_b}\\
&\hspace{5.3mm}\;\;\;\;\;\; \alpha _{n,k}^{{u_b}} + \beta _{n,k}^{{u_b}} + \kappa _{n,k}^{{u_b}} = 1,\alpha _{n,k}^{{u_b}},\beta _{n,k}^{{u_b}},\kappa _{n,k}^{{u_b}} \ge 0,\forall n,k,
\tag{\ref{eq:optimization_problem}{c}} \label{eq:optimization_problem_c}\\
&\hspace{5.3mm}\;\;\;\;\;\; \alpha _k^{{u_s}} + \kappa _k^{{u_s}} = 1,\alpha _k^{{u_s}},\kappa _k^{{u_s}} \ge 0,\forall k,
\tag{\ref{eq:optimization_problem}{d}} \label{eq:optimization_problem_d}
\end{align}
in which
${{\bm{\tau }}^{{u_b}}} = [\tau _1^{{u_b}},...,\tau _N^{{u_b}}],{{\bm{\theta }}^{{u_b},rad}} = [\theta _1^{{u_b}},...,\theta _N^{{u_b}}]$
are the durations of the subframe of the BSs,
${\tau ^b},{\tau ^{{u_s}}},{\theta ^{{u_s},rad}}$
are the durations of the subframe of the satellite,
${{\bm{\alpha }}^{{u_b}}} = [\alpha _{1,1}^{{u_b}},...,\alpha _{N,K_N^b}^{{u_b}}],{{\bm{\beta }}^{{u_b}}} = [\beta _{1,1}^{{u_b}},...,\beta _{N,K_N^b}^{{u_b}}],{{\bm{\kappa }}^{{u_b}}} = [\kappa _{1,1}^{{u_b}},...,\kappa _{N,K_N^b}^{{u_b}}]$
are the task partitioning ratios of TUEs,
and
${{\bm{\alpha }}^{{u_s}}} = [\alpha _1^{{u_s}},...,\alpha _{{K^s}}^{{u_s}}],{{\bm{\kappa }}^{{u_s}}} = [\kappa _1^{{u_s}},...,\kappa _{{K^s}}^{{u_s}}]$
are the task partitioning ratios of SUEs.
Constraint (\ref{eq:optimization_problem_a}) preserves the validity of subframe allocation of the BSs, constraint (\ref{eq:optimization_problem_b}) preserves the validity of subframe allocation of the satellite, constraint (\ref{eq:optimization_problem_c}) preserves the validity of task partitioning of TUEs, and constraint (\ref{eq:optimization_problem_d}) preserves the validity of task partitioning of SUEs.

\section{Task Partitioning Strategy of TUEs} \label{sec:4}

\subsection{Problem Decomposition}\label{sec:4_1}
Owing to the inter-coupled variables and the non-convex utility function, problem ${\cal P}$ in (\ref{eq:optimization_problem}) is of high complexity. However, since users in the network are considered to implement sensing and communication in a time-division manner, the coupling between sensing and communication is caused by the subframe allocation strategy. If we keep the subframe allocation strategy ${{{\bm{\tau }}^{{u_b}}},{{\bm{\theta }}^{{u_b},rad}},{\tau ^b},{\tau ^{{u_s}}},{\theta ^{{u_s},rad}}}$ fixed, the radar MI $I^{total}$ is also fixed based on (\ref{eq:MI_BS_user}) and (\ref{eq:MI_satellite_user}). Then, maximizing the utility $U^{total}$ in problem $\cal P$ is equal to minimize the task completion delay $t^{total}$. Also, when the subframe allocation strategy is fixed, the task completion delay of TUEs is only determined by the task partitioning strategy of TUEs, while the task partitioning strategy of TUEs will only influence the task completion delay of TUEs. This conclusion applies to SUEs similarly. Thus the task partitioning problem of TUEs and the task partitioning problem of SUEs are independent with each other in this case. The task partitioning strategy of TUEs can be obtained by solving the subproblem as follows:
\begin{align}\label{eq:subproblem_BS_user_all}
&{{\cal P}^{{u_b},delay}}:
\mathop {\min }\limits_{{{\bm{\alpha }}^{{u_b}}},{{\bm{\beta }}^{{u_b}}},{{\bm{\kappa }}^{{u_b}}}} {t^{{u_b},total}} = \sum\limits_{n = 1}^N {\sum\limits_{k = 1}^{K_n^b} {t_{n,k}^{{u_b},total}} }, \\
&\hspace{17.5mm} s.t.\;\alpha _{n,k}^{{u_b}} + \beta _{n,k}^{{u_b}} + \kappa _{n,k}^{{u_b}} = 1,\forall n,k, \nonumber \\
&\hspace{17.5mm}\;\;\;\;\;\; \alpha _{n,k}^{{u_b}},\beta _{n,k}^{{u_b}},\kappa _{n,k}^{{u_b}} \ge 0,\forall n,k. \nonumber
\end{align}

\vspace{-1mm}

In Section \ref{sec:3_1}, we analyzed the task completion delay of task $Q_{n,k}^{{u_b}}$ when being executed at different locations. Since the completion delay of task $Q_{n,k}^{{u_b}}$ in (\ref{eq:total_delay_BS_user}) is a discontinuous function, the task partitioning strategy of TUEs needs to be analyzed differently based on whether it is offloaded to the cloud.

\subsection{Local-Edge Task Partitioning Strategy of TUEs}\label{sec_4_2}

In this section, we first derive the local-edge task partitioning strategy of TUEs, if the computation task is not offloaded to the cloud. In this case, we have $t_{n,k}^{{u_b},total} = \max \{ t_{n,k}^{{u_b},u},t_{n,k}^{{u_b},b}\} $ and $\alpha _{n,k}^{{u_b}} + \beta _{n,k}^{{u_b}} = 1$ for task $Q_{n,k}^{{u_b}}$. Then, the partitioning strategy of task $Q_{n,k}^{{u_b}}$ can be calculated independently by solving the subproblem ${\cal P}_{n,k}^{{u_b},b}$, which is given by
\begin{align}\label{eq:subproblem_BS_user_each}
&{\cal P}_{n,k}^{{u_b},b}:
\mathop {\min }\limits_{\alpha _{n,k}^{{u_b}}} t_{n,k}^{{u_b},total}, \\
&\hspace{10.5mm} s.t.\;\alpha _{n,k}^{{u_b}} \in [0,1]. \nonumber
\end{align}

\begin{figure}[t]
\begin{center}
\includegraphics[width=1\linewidth]{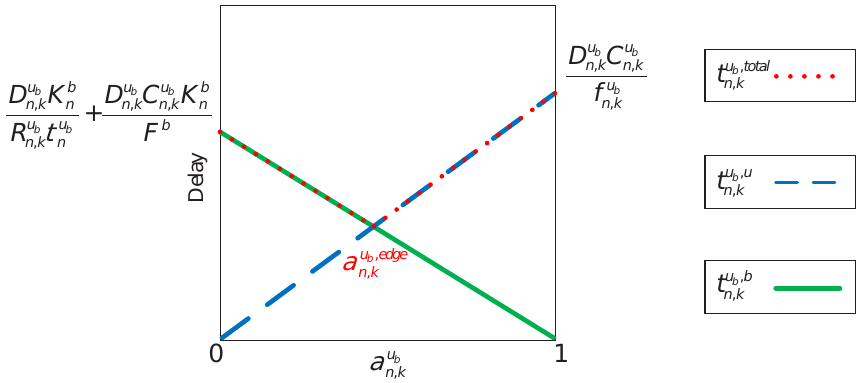}
\end{center}
\vspace{-5mm}
\captionsetup{font={footnotesize}}
\caption{The task completion delay of different task partitioning ratio.}
\label{fig:partitioning_ratio}
\vspace{-5mm}
\end{figure}

When $\alpha _{n,k}^{{u_b}}$ ranging between 0 and 1, the variation tendency of the task completion delay $t_{n,k}^{{u_b},total}$ is shown in Fig. \ref{fig:partitioning_ratio}. It can be seen that the task completion delay is minimized at the intersection. By solving the boundary condition $t_{n,k}^{{u_b},u} = t_{n,k}^{{u_b},b}$, the local-edge task partitioning strategy of task $Q_{n,k}^{{u_b}}$ is derived by
\begin{align}\label{eq:BS_user_task_partitioning_ratio_local_edge}
&\alpha _{n,k}^{{u_b},edge} = \dfrac{{\dfrac{{K_n^b}}{{R_{n,k}^{{u_b}}\tau _n^{{u_b}}}} + \dfrac{{C_{n,k}^{{u_b}}K_n^b}}{{{F^b}}}}}{{\dfrac{{K_n^b}}{{R_{n,k}^{{u_b}}\tau _n^{{u_b}}}} + \dfrac{{C_{n,k}^{{u_b}}K_n^b}}{{{F^b}}} + \dfrac{{C_{n,k}^{{u_b}}}}{{f_{n,k}^{{u_b}}}}}}, \\
&\beta _{n,k}^{{u_b},edge} = \dfrac{{\dfrac{{C_{n,k}^{{u_b}}}}{{f_{n,k}^{{u_b}}}}}}{{\dfrac{{K_n^b}}{{R_{n,k}^{{u_b}}\tau _n^{{u_b}}}} + \dfrac{{C_{n,k}^{{u_b}}K_n^b}}{{{F^b}}} + \dfrac{{C_{n,k}^{{u_b}}}}{{f_{n,k}^{{u_b}}}}}}. \nonumber
\end{align}
Then, the task completion delay of task $Q_{n,k}^{{u_b}}$ is obtained as
\begin{align}\label{eq:BS_user_task_completion_delay_local_edge}
&t_{n,k}^{{u_b},total,edge} = t_{n,k}^{{u_b},u,edge} = t_{n,k}^{{u_b},b,edge} \\
& \hspace{14mm} = t_{n,k}^{{u_b},comp,edge} = t_{n,k}^{{u_b},tran,edge} + t_{n,k}^{b,comp,edge}, \nonumber \\
&t_{n,k}^{{u_b},comp,edge} = \frac{{\alpha _{n,k}^{{u_b},edge}D_{n,k}^{{u_b}}C_{n,k}^{{u_b}}}}{{f_{n,k}^{{u_b}}}}, ~~
t_{n,k}^{{u_b},tran,edge} = \frac{{\beta _{n,k}^{{u_b},edge}D_{n,k}^{{u_b}}K_n^b}}{{R_{n,k}^{{u_b}}\tau _n^{{u_b}}}}, \nonumber \\
&t_{n,k}^{b,comp,edge} = \frac{{\beta _{n,k}^{{u_b},edge}D_{n,k}^{{u_b}}C_{n,k}^{{u_b}}K_n^b}}{{{F^b}}}. \nonumber
\end{align}

For any task $Q_{n,k}^{{u_b}}$, if it is not offloaded to the cloud, the local-edge task partitioning strategy is derived by (\ref{eq:BS_user_task_partitioning_ratio_local_edge}). However, to solve the original subproblem ${{\cal P}^{{u_b},delay}}$ in (\ref{eq:subproblem_BS_user_all}), the task partitioning strategy needs to be derived considering the cooperation of the local computing, the edge computing, and the cloud computing.
If we further offload task $Q_{n,k}^{{u_b}}$ to the cloud via the satellite, although the edge computing delay can be reduced, it will lead to an extra propagation delay due to the long transmission distance. Obviously, offloading task $Q_{n,k}^{{u_b}}$ to the cloud should not increase the delay. If the cost brought by cloud computing is larger than the benefit obtained, it is better to fully process the task at the user device and the BS. Thus we discuss the task splitting strategy by comparing the edge computing delay $t_{n,k}^{b,comp,edge}$ with the propagation delay $t_n^{b,trip}$. The following two cases will be discussed in Section \ref{sec:4_3} and Section \ref{sec:4_4} separately.

\subsection{Optimal Strategy Discussion: When $t_{n,k}^{b,comp,edge} \le t_n^{b,trip}$}\label{sec:4_3}
Assume that task $Q_{n,k}^{{u_b}}$ is offloaded to the cloud, and the total amount of data executed at the cloud is ${\kappa _{n,k}^{{u_b}}D_{n,k}^{{u_b}}}$. The task completion delay of task $Q_{n,k}^{{u_b}}$ is calculated by $t_{n,k}^{{u_b},total} = \max \{ t_{n,k}^{{u_b},u},t_{n,k}^{{u_b},b},t_{n,k}^{{u_b},c}\}$. We will then prove $t_{n,k}^{{u_b},total} > t_{n,k}^{{u_b},total,edge}$ by analyzing the value of $\alpha _{n,k}^{{u_b}}$.

\begin{itemize}
\item Case 1: $\alpha _{n,k}^{{u_b}} > \alpha _{n,k}^{{u_b},edge}$

    In the case of $\alpha _{n,k}^{{u_b}} > \alpha _{n,k}^{{u_b},edge}$, it can be obtained that
    \begin{align}\label{eq:BS_user_case1_alpha_large_local}
    t_{n,k}^{{u_b},comp} = \frac{{\alpha _{n,k}^{{u_b}}D_{n,k}^{{u_b}}C_{n,k}^{{u_b}}}}{{f_{n,k}^{{u_b}}}} > \frac{{\alpha _{n,k}^{{u_b},edge}D_{n,k}^{{u_b}}C_{n,k}^{{u_b}}}}{{f_{n,k}^{{u_b}}}} = t_{n,k}^{{u_b},comp,edge}.
    \end{align}
    Then, we can conclude that
    \begin{align}\label{eq:BS_user_case1_alpha_large_total}
    t_{n,k}^{{u_b},total} \ge t_{n,k}^{{u_b},u} = t_{n,k}^{{u_b},comp} > t_{n,k}^{{u_b},comp,edge} = t_{n,k}^{{u_b},total,edge}.
    \end{align}
\item Case 2: $\alpha _{n,k}^{{u_b}} \le \alpha _{n,k}^{{u_b},edge}$

    In the case of $\alpha _{n,k}^{{u_b}} \le \alpha _{n,k}^{{u_b},edge}$, it can be obtained that
    \begin{align}\label{eq:BS_user_case1_alpha_small_usertoBS}
    t_{n,k}^{{u_b},tran}
    &= \frac{{(1 - \alpha _{n,k}^{{u_b}})D_{n,k}^{{u_b}}K_n^b}}{{R_{n,k}^{{u_b}}\tau _n^{{u_b}}}} \\
    &\ge \frac{{(1 - \alpha _{n,k}^{{u_b},edge})D_{n,k}^{{u_b}}K_n^b}}{{R_{n,k}^{{u_b}}\tau _n^{{u_b}}}} = t_{n,k}^{{u_b},tran,edge}. \nonumber
    \end{align}
    Since $t_{n,k}^{b,comp,edge} \le t_n^{b,trip}$, the following inequality holds for task $Q_{n,k}^{{u_b}}$ when $\kappa _{n,k}^{{u_b}} > 0$.
    \begin{align}\label{eq:BS_user_case1_alpha_small_BStocloud}
    t_{n,k}^{b,tran} + t_n^{b,trip} = \frac{{\kappa _{n,k}^{{u_b}}D_{n,k}^{{u_b}}{K^{{u_b},c}}}}{{R_n^b{\tau ^b}}} + t_n^{b,trip} > t_{n,k}^{b,comp,edge}.
    \end{align}
    Then, we can conclude that
    \begin{align}\label{eq:BS_user_case1_alpha_small_total}
    t_{n,k}^{{u_b},total}  \ge t_{n,k}^{{u_b},c}
    & = t_{n,k}^{{u_b},tran} + t_{n,k}^{b,tran} + t_n^{b,trip} \\
    & > t_{n,k}^{{u_b},tran,edge} + t_{n,k}^{b,comp,edge} = t_{n,k}^{{u_b},total,edge}. \nonumber
    \end{align}

\end{itemize}

When $t_{n,k}^{b,comp,edge} \le t_n^{b,trip}$, we prove that $t_{n,k}^{{u_b},total} > t_{n,k}^{{u_b},total,edge}$ holds for any $\kappa _{n,k}^{{u_b}} > 0$. Thus task $Q_{n,k}^{{u_b}}$ should not be offloaded to the cloud. The optimal task partitioning strategy is given by $\alpha _{n,k}^{{u_b}*} = \alpha _{n,k}^{{u_b},edge},\beta _{n,k}^{{u_b}*} = \beta _{n,k}^{{u_b},edge},\kappa _{n,k}^{{u_b}*} = 0$.

\subsection{Optimal Strategy Discussion: When $t_{n,k}^{b,comp,edge} > t_n^{b,trip}$}\label{sec:4_4}
When $t_{n,k}^{b,comp,edge} > t_n^{b,trip}$, there will always exist a task partitioning strategy with $\kappa _{n,k}^{{u_b}} > 0$, for which the task completion delay of task $Q_{n,k}^{{u_b}}$ satisfies $t_{n,k}^{{u_b},total} < t_{n,k}^{{u_b},total,edge}$. This conclusion can be directly derived based on the expressions of $t_{n,k}^{{u_b},total}$ and $t_{n,k}^{{u_b},total,edge}$ in (\ref{eq:total_delay_BS_user}) and (\ref{eq:BS_user_task_completion_delay_local_edge}).
In this case, it is possible to achieve lower task completion delay by offloading the task to the cloud.
In Section \ref{sec_4_2}, the local-edge task partitioning strategy of task $Q_{n,k}^{{u_b}}$ is derived by solving the boundary condition $t_{n,k}^{{u_b},u} = t_{n,k}^{{u_b},b}$. By extending this conclusion to the local-edge-cloud computing case, Theorem \ref{theorem:task_partitioning_optimality} is derived and proved as follows.

\begin{theorem}\label{theorem:task_partitioning_optimality}
For task $Q_{n,k}^{{u_b}}$, if $t_{n,k}^{b,comp,edge} > t_n^{b,trip}$, the task completion delay $t_{n,k}^{total}$ is minimized when the boundary condition $t_{n,k}^{{u_b},u} = t_{n,k}^{{u_b},b} = t_{n,k}^{{u_b},c}$ is satisfied.
\end{theorem}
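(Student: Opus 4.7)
The plan is to frame the single-task optimization as a piecewise-linear minimax on the simplex $\{(\alpha_{n,k}^{{u_b}}, \beta_{n,k}^{{u_b}}, \kappa_{n,k}^{{u_b}}) : \alpha_{n,k}^{{u_b}} + \beta_{n,k}^{{u_b}} + \kappa_{n,k}^{{u_b}} = 1\}$ with non-negative coordinates, and then show by a local exchange argument that the only candidate optimum with $\kappa_{n,k}^{{u_b}} > 0$ is the three-way equalization point. Because the three completion delays $t_{n,k}^{{u_b},u}, t_{n,k}^{{u_b},b}, t_{n,k}^{{u_b},c}$ are each affine in $(\alpha_{n,k}^{{u_b}}, \beta_{n,k}^{{u_b}}, \kappa_{n,k}^{{u_b}})$, the objective $t_{n,k}^{{u_b},total}$ is a convex piecewise-linear max, so a global minimizer exists and standard subgradient reasoning applies.

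First, I would record the monotonicities read off from the formulas in Section \ref{sec:3_1}: $t_{n,k}^{{u_b},u}$ is strictly increasing in $\alpha_{n,k}^{{u_b}}$ and independent of the other two variables; the user-to-BS transmission term $t_{n,k}^{{u_b},tran}$ shared by $t_{n,k}^{{u_b},b}$ and $t_{n,k}^{{u_b},c}$ is strictly decreasing in $\alpha_{n,k}^{{u_b}}$; the edge-compute term in $t_{n,k}^{{u_b},b}$ is strictly increasing in $\beta_{n,k}^{{u_b}}$; and the BS-to-satellite transmission term in $t_{n,k}^{{u_b},c}$ is strictly increasing in $\kappa_{n,k}^{{u_b}}$ (while $t_n^{b,trip}$ is a constant).

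Next, I would argue by contradiction. Suppose at an optimum $(\alpha_{n,k}^{{u_b}*}, \beta_{n,k}^{{u_b}*}, \kappa_{n,k}^{{u_b}*})$ with $\kappa_{n,k}^{{u_b}*} > 0$ at least one of the three delays is strictly below the maximum. I would then exhibit an admissible direction $(d_\alpha, d_\beta, d_\kappa)$ summing to zero that strictly decreases every currently-binding delay while only slightly increasing the strictly-smaller one. For instance, if $t_{n,k}^{{u_b},u}$ is the sole small one, increase $\alpha_{n,k}^{{u_b}}$ and split the decrease between $\beta_{n,k}^{{u_b}}$ and $\kappa_{n,k}^{{u_b}}$: the monotonicities above force both $t_{n,k}^{{u_b},b}$ and $t_{n,k}^{{u_b},c}$ to fall, while $t_{n,k}^{{u_b},u}$ rises by an amount controllable via step size so as to remain below the old maximum. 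Symmetric perturbations — increasing $\beta_{n,k}^{{u_b}}$ or $\kappa_{n,k}^{{u_b}}$ at the expense of the others — handle the cases where $t_{n,k}^{{u_b},b}$ or $t_{n,k}^{{u_b},c}$ is strictly below the max, and also the two-equal-one-small cases. In every case the maximum strictly decreases for small enough step, contradicting optimality, so the three delays must coincide.

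Finally, I would invoke the hypothesis $t_{n,k}^{b,comp,edge} > t_n^{b,trip}$, mirroring the discussion in Section \ref{sec:4_3}, to check that the three-way balance point actually sits in the regime $\kappa_{n,k}^{{u_b}} > 0$ and hence that $t_{n,k}^{{u_b},total}$ is governed by the second branch of (\ref{eq:total_delay_BS_user}) rather than collapsing back to the local-edge case; intuitively, as $\kappa_{n,k}^{{u_b}} \to 0^+$ the cloud completion delay reduces to $t_{n,k}^{{u_b},tran,edge} + t_n^{b,trip}$, and the inequality guarantees this value falls below $t_{n,k}^{{u_b},total,edge}$ so that a positive $\kappa_{n,k}^{{u_b}}$ can equalize all three branches. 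The hard part will be the boundary bookkeeping in the exchange step: when the optimum lies on $\beta_{n,k}^{{u_b}} = 0$ or $\alpha_{n,k}^{{u_b}} = 0$, the perturbation direction must be chosen with the correct sign pattern to stay feasible, and such boundary candidates have to be explicitly excluded by showing that the monotonicities already rank the three delays strictly there, precluding the required three-way equality.
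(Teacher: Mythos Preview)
Your approach is correct and genuinely different from the paper's. The paper proceeds by direct comparison: it fixes the equalization point $(\alpha^{u_b,cloud}_{n,k},\beta^{u_b,cloud}_{n,k},\kappa^{u_b,cloud}_{n,k})$, assumes a non-equalizing partition achieves $t^{u_b,total}_{n,k}\le t^{u_b,total,cloud}_{n,k}$, and then chains the inequalities $t^{u_b,u}_{n,k}\le t^{u_b,u,cloud}_{n,k}$, $t^{u_b,b}_{n,k}\le t^{u_b,b,cloud}_{n,k}$, $t^{u_b,c}_{n,k}\le t^{u_b,c,cloud}_{n,k}$ through the explicit affine expressions to force $\alpha^{u_b}_{n,k}\le\alpha^{u_b,cloud}_{n,k}$, then $\beta^{u_b}_{n,k}\le\beta^{u_b,cloud}_{n,k}$, hence $\kappa^{u_b}_{n,k}\ge\kappa^{u_b,cloud}_{n,k}$, which in turn gives $t^{u_b,c}_{n,k}\ge t^{u_b,c,cloud}_{n,k}$ and collapses everything to equality. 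Your route is the standard convex minimax descent: at any non-equalizing point you exhibit a feasible simplex direction that strictly lowers every active branch. The paper's argument is shorter and exploits the specific cascade structure of the three delays (each later inequality uses the earlier one), so it never needs to tune perturbation ratios; your argument is more generic and would apply verbatim to any affine three-branch minimax, but in the cases where $t^{u_b,b}_{n,k}$ or $t^{u_b,c}_{n,k}$ is the slack branch you do need to pick $|d\kappa|/|d\alpha|$ or $|d\beta|/|d\alpha|$ large enough to offset the increase in $t^{u_b,tran}_{n,k}$---you should make that ratio explicit rather than appeal to ``symmetry,'' since the three branches are not structurally symmetric. Your treatment of the hypothesis $t^{b,comp,edge}_{n,k}>t^{b,trip}_n$ to place the equalizer in the $\kappa^{u_b}_{n,k}>0$ regime matches what the paper does informally at the start of Section~\ref{sec:4_4}.
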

\begin{proof}[\bfseries Proof]
See Appendix A.
\end{proof}

Based on Theorem \ref{theorem:task_partitioning_optimality}, the local-edge-cloud partitioning strategy of task $Q_{n,k}^{{u_b}}$ can be obtained by solving the boundary condition $t_{n,k}^{{u_b},u} = t_{n,k}^{{u_b},b} = t_{n,k}^{{u_b},c}$, which is given by
\begin{align}\label{eq:BS_user_task_partitioning_ratio_edge_cloud}
&\alpha _{n,k}^{{u_b},cloud} = \\ &\dfrac{{\dfrac{{D_{n,k}^{{u_b}}K_n^b}}{{R_{n,k}^{{u_b}}\tau _n^{{u_b}}}} + \dfrac{{D_{n,k}^{{u_b}}C_{n,k}^{{u_b}}K_n^b{K^{{u_b},c}}}}{{C_{n,k}^{{u_b}}R_n^b{\tau ^b}K_n^b + {F^b}{K^{{u_b},c}}}} + \dfrac{{t_n^{b,trip}C_{n,k}^{{u_b}}R_n^b{\tau ^b}K_n^b}}{{C_{n,k}^{{u_b}}R_n^b{\tau ^b}K_n^b + {F^b}{K^{{u_b},c}}}}}}{{\dfrac{{D_{n,k}^{{u_b}}K_n^b}}{{R_{n,k}^{{u_b}}\tau _n^{{u_b}}}} + \dfrac{{D_{n,k}^{{u_b}}C_{n,k}^{{u_b}}K_n^b{K^{{u_b},c}}}}{{C_{n,k}^{{u_b}}R_n^b{\tau ^b}K_n^b + {F^b}{K^{{u_b},c}}}} + \dfrac{{D_{n,k}^{{u_b}}C_{n,k}^{{u_b}}}}{{f_{n,k}^{{u_b}}}}}}, \nonumber \\
&\beta _{n,k}^{{u_b},cloud} = \dfrac{{\dfrac{{(1 - \alpha _{n,k}^{{u_b},cloud})D_{n,k}^{{u_b}}{K^{{u_b},c}}}}{{R_n^b{\tau ^b}}} + t_n^{b,trip}}}{{\dfrac{{D_{n,k}^{{u_b}}C_{n,k}^{{u_b}}K_n^b}}{{{F^b}}} + \dfrac{{D_{n,k}^{{u_b}}{K^{{u_b},c}}}}{{R_n^b{\tau ^b}}}}}, \nonumber \\
&\kappa _{n,k}^{{u_b},cloud} = \dfrac{{\dfrac{{(1 - \alpha _{n,k}^{{u_b},cloud})D_{n,k}^{{u_b}}C_{n,k}^{{u_b}}K_n^b}}{{{F^b}}} - t_n^{b,trip}}}{{\dfrac{{D_{n,k}^{{u_b}}C_{n,k}^{{u_b}}K_n^b}}{{{F^b}}} + \dfrac{{D_{n,k}^{{u_b}}{K^{{u_b},c}}}}{{R_n^b{\tau ^b}}}}}. \nonumber
\end{align}
Then, the task completion delay of task $Q_{n,k}^{{u_b}}$ is obtained as
\begin{align}\label{eq:BS_user_task_completion_delay_edge_cloud}
t_{n,k}^{{u_b},total,cloud} = t_{n,k}^{{u_b},u,cloud} = t_{n,k}^{{u_b},b,cloud} = t_{n,k}^{{u_b},c,cloud} = \frac{{\alpha _{n,k}^{{u_b},cloud}D_{n,k}^{{u_b}}C_{n,k}^{{u_b}}}}{{f_{n,k}^{{u_b}}}}.
\end{align}

\subsection{Optimal Task Partitioning Strategy of TUEs}\label{sec:4_5}
Based on the analysis above, the task completion delay of TUEs is given by (\ref{eq:BS_user_task_completion_delay_edge_cloud}) if offloaded to the cloud, or given by (\ref{eq:BS_user_task_completion_delay_local_edge}) if not offloaded to the cloud. For task $Q_{n,k}^{{u_b}}$, we can see that as long as the condition $t_{n,k}^{b,comp,edge} > t_n^{b,trip}$ is satisfied, the task completion delay in (\ref{eq:BS_user_task_completion_delay_edge_cloud}) is always lower than the task completion delay in (\ref{eq:BS_user_task_completion_delay_local_edge}).
To guarantee the fairness among users, we consider that task $Q_{n,k}^{{u_b}}$ will be offloaded to the cloud if the condition $t_{n,k}^{b,comp,edge} > t_n^{b,trip}$ is satisfied, and vice versa. Then, the number of TUEs offloaded to the cloud can be calculated by
\begin{align}\label{eq:BS_user_cloud_number_calculation}
{K^{{u_b},c}} = \sum\limits_{n = 1}^N {\sum\limits_{k = 1}^{{K^b_n}} {\Gamma(t_{n,k}^{b,comp,edge} > t_n^{trip})} }.
\end{align}

Finally, we turn back to subproblem ${{\cal P}^{{u_b},delay}}$, and propose Algorithm \ref{alg:optimal_task_partitioning_BS} to calculated the optimal task partitioning strategy of TUEs. In Algorithm \ref{alg:optimal_task_partitioning_BS}, we first calculate the local-edge task partitioning strategy $\alpha _{n,k}^{{u_b},edge},\beta _{n,k}^{{u_b},edge}$ for each task, based on which the task completion delay $t_{n,k}^{{u_b},total,edge}$ and the edge computing delay $t_{n,k}^{b,comp,edge}$ can also be calculated. Then, we determine the task partitioning strategy of each task by comparing the edge computing delay $t_{n,k}^{b,comp,edge}$ with the propagation delay $t_n^{b,trip}$. If $t_{n,k}^{b,comp,edge} \le t_n^{b,trip}$, the task partitioning strategy of task $Q_{n,k}^{{u_b}}$ is given by $\alpha _{n,k}^{{u_b}*} = \alpha _{n,k}^{{u_b},edge},\beta _{n,k}^{{u_b}*} = \beta _{n,k}^{{u_b},edge},\kappa _{n,k}^{{u_b}*} = 0$. If $t_{n,k}^{b,comp,edge} > t_n^{b,trip}$, the task partitioning strategy of task $Q_{n,k}^{{u_b}}$ is given by $\alpha _{n,k}^{{u_b}*} = \alpha _{n,k}^{{u_b},cloud},\beta _{n,k}^{{u_b}*} = \beta _{n,k}^{{u_b},cloud},\kappa _{n,k}^{{u_b}*} = \kappa _{n,k}^{{u_b},cloud}$.
Note that the complexity for calculating the task partitioning strategy and task completion delay is $\mathcal{O}(1)$ in one iteration. The overall complexity for executing Algorithm \ref{alg:optimal_task_partitioning_BS} is given by $\mathcal{O}({K^\textrm{B}})$.

\begin{algorithm}[hb]
\caption{Optimal Task Partitioning Strategy of TUEs} \label{alg:optimal_task_partitioning_BS}
\begin{algorithmic}[1]
\begin{spacing}{1.0}
\STATE Initialize ${K^{{u_b},c}} = 0$
\FOR{$n = 1$ to $N$}
  \FOR{$k= 1$ to $K_n^b$}
    \STATE Calculate $\alpha _{n,k}^{{u_b},edge},\beta _{n,k}^{{u_b},edge}$ according to (\ref{eq:BS_user_task_partitioning_ratio_local_edge})
    \STATE Calculate $t_{n,k}^{{u_b},total,edge},t_{n,k}^{b,comp,edge}$ according to (\ref{eq:BS_user_task_completion_delay_local_edge})
    \STATE Update ${K^{{u_b},c}} = {K^{{u_b},c}} + \Gamma(t_{n,k}^{b,comp,edge} > t_n^{b,trip})$
  \ENDFOR
\ENDFOR
\FOR{$n = 1$ to $N$}
  \FOR{$k= 1$ to $K_n^b$}
    \IF{$t_{n,k}^{b,comp,edge} \le t_n^{b,trip}$}
      \STATE Update the task partitioning strategy by $\alpha _{n,k}^{{u_b}*} = \alpha _{n,k}^{{u_b},edge},\beta _{n,k}^{{u_b}*} = \beta _{n,k}^{{u_b},edge},\kappa _{n,k}^{{u_b}*} = 0$
      \STATE Update the task completion delay by $t_{n,k}^{{u_b},total*} = t_{n,k}^{{u_b},total,edge}$
    \ELSE
      \STATE Update $\alpha _{n,k}^{{u_b}*} = \alpha _{n,k}^{{u_b},cloud},\beta _{n,k}^{{u_b}*} = \beta _{n,k}^{{u_b},cloud},\kappa _{n,k}^{{u_b}*} = \kappa _{n,k}^{{u_b},cloud}$ according to (\ref{eq:BS_user_task_partitioning_ratio_edge_cloud})
      \STATE Calculate the task completion delay by $t_{n,k}^{{u_b},total*} = t_{n,k}^{{u_b},total,cloud}$ according to (\ref{eq:BS_user_task_completion_delay_edge_cloud})
    \ENDIF
  \ENDFOR
\ENDFOR
\vspace{-5mm}
\end{spacing}
\end{algorithmic}
\end{algorithm}

\section{Task Partitioning Strategy of SUE}\label{sec:5}

\subsection{Problem Decomposition}
As analyzed in Section \ref{sec:4_1}, if we keep the subframe allocation strategy ${{{\bm{\tau }}^{{u_b}}},{{\bm{\theta }}^{{u_b},rad}},{\tau ^b},{\tau ^{{u_s}}},{\theta ^{{u_s},rad}}}$ fixed, the task partitioning strategy of SUEs can be obtained by solving the subproblem as follows:
\begin{align}\label{eq:subproblem_satellite_user_all}
&{{\cal P}^{{u_s},delay}}:
\mathop {\min }\limits_{{{\bm{\alpha }}^{{u_s}}},{{\bm{\kappa }}^{{u_s}}}} {t^{{u_s},total}} = \sum\limits_{k = 1}^{{K^s}} {t_k^{{u_s},total}},  \\
&\hspace{15mm} s.t.\;\alpha _k^{{u_s}} + \kappa _k^{{u_s}} = 1,\forall k, \nonumber \\
&\hspace{15mm}\;\;\;\;\;\, \alpha _k^{{u_s}},\kappa _k^{{u_s}} \ge 0,\forall k. \nonumber
\end{align}

Similarly, since the completion delay of task $Q_k^{{u_s}}$ in (\ref{eq:total_delay_satellite_user}) is a discontinuous function, the task partitioning strategy of SUEs need to be analyzed differently based on whether it is offloaded to the cloud.
For any task $Q_k^{{u_s}}$, if it is not offloaded to the cloud, we have $\alpha _k^{{u_s},local} = 1,\kappa _k^{{u_s},local} = 0$. The task completion delay is given by
\begin{align}\label{eq:satellite_user_task_completion_delay_local}
t_k^{{u_s},total,local} = t_k^{{u_s},comp,local} = \frac{{D_k^{{u_s}}C_k^{{u_s}}}}{{f_k^{{u_s}}}}.
\end{align}
Then, by comparing the local computing delay $t_k^{{u_s},comp,local}$ with the propagation delay $t_k^{{u_s},trip}$, the following two cases will be discussed in Section \ref{sec:5_2} and Section \ref{sec:5_3} separately.

\subsection{Optimal Strategy Discussion: When $t_k^{{u_s},comp,local} \le t_k^{{u_s},trip}$}\label{sec:5_2}
Assume that task $Q_k^{{u_s}}$ is offloaded to the cloud, and the total amount of data executed at the cloud is ${\kappa _k^{{u_s}}D_k^{{u_s}}}$. The task completion delay of task $Q_k^{{u_s}}$ is calculated by $t_k^{{u_s},total} = \max \{ t_k^{{u_s},u},t_k^{{u_s},c}\}$. Since $t_k^{{u_s},comp,local} \le t_k^{{u_s},trip}$, the following inequality holds for task $Q_k^{{u_s}}$ when $\kappa _k^{{u_s}} > 0$.
\begin{align}\label{eq:satellite_user_case1_total}
t_k^{{u_s},total} \ge t_k^{{u_s},c} = \frac{{\kappa _k^{{u_s}}D_k^{{u_s}}{K^{{u_s},c}}}}{{R_k^{{u_s}}{\tau ^{{u_s}}}}} + t_k^{{u_s},trip} > t_k^{{u_s},comp,local} = t_k^{{u_s},total,local}.
\end{align}

When $t_k^{{u_s},comp,local} \le t_k^{{u_s},trip}$, we prove that $t_k^{{u_s},total} > t_k^{{u_s},total,local}$ holds for any $\kappa _k^{{u_s}} > 0$. Thus task $Q_k^{{u_s}}$ should not be offloaded to the cloud. The optimal task partitioning strategy is given by $\alpha _k^{{u_s}*} = 1,\kappa _k^{{u_s}*} = 0$.

\subsection{Optimal Strategy Discussion: When $t_k^{{u_s},comp,local} > t_k^{{u_s},trip}$}\label{sec:5_3}
When $t_k^{{u_s},comp,local} > t_k^{{u_s},trip}$, there will always exist a task partitioning strategy with $\kappa _k^{{u_s}} > 0$, for which the task completion delay of task $Q_k^{{u_s}}$ satisfies $t_k^{{u_s},total} < t_k^{{u_s},total,local}$. This conclusion can be directly derived based on the expressions of $t_k^{{u_s},total}$ and $t_k^{{u_s},total,local}$ in (\ref{eq:total_delay_satellite_user}) and (\ref{eq:satellite_user_task_completion_delay_local}).
In this case, lower task completion delay can be achieved by offloading the task to the cloud.

Similar to the analysis of TUEs, the task completion delay is minimized when $t_k^{{u_s},u} = t_k^{{u_s},c}$. By solving the boundary condition, the local-cloud task partitioning strategy of task $Q_k^{{u_s}}$ is derived
\begin{align}\label{eq:satellite_user_task_partitioning_ratio_local_cloud}
&\alpha _k^{{u_s},cloud} = \dfrac{{\dfrac{{D_k^{{u_s}}{K^{{u_s},c}}}}{{R_k^{{u_s}}{\tau ^{{u_s}}}}} + t_k^{{u_s},trip}}}{{\dfrac{{D_k^{{u_s}}{K^{{u_s},c}}}}{{R_k^{{u_s}}{\tau ^{{u_s}}}}} + \dfrac{{D_k^{{u_s}}C_k^{{u_s}}}}{{f_k^{{u_s}}}}}}, \\
&\kappa _k^{{u_s},cloud} = \dfrac{{\dfrac{{D_k^{{u_s}}C_k^{{u_s}}}}{{f_k^{{u_s}}}} - t_k^{{u_s},trip}}}{{\dfrac{{D_k^{{u_s}}{K^{{u_s},c}}}}{{R_k^{{u_s}}{\tau ^{{u_s}}}}} + \dfrac{{D_k^{{u_s}}C_k^{{u_s}}}}{{f_k^{{u_s}}}}}}. \nonumber
\end{align}
Then, the task completion delay of task $Q_k^{{u_s}}$ is obtained as
\begin{align}\label{eq:satellite_user_task_completion_delay_local_cloud}
t_k^{{u_s},total,cloud} = t_k^{{u_s},u,cloud} = t_k^{{u_s},c,cloud} = \frac{{\alpha _k^{{u_s},cloud}D_k^{{u_s}}C_k^{{u_s}}}}{{f_k^{{u_s}}}}.
\end{align}

\subsection{Optimal Task Partitioning Strategy of SUEs}\label{sec:5_4}
In the analysis above, we calculate the task partitioning strategy and the task completion delay for the two different cases. Similarly, to guarantee the fairness among users, we consider that task $Q_k^{{u_s}}$ will be offloaded to the cloud if the condition $t_k^{{u_s},comp,local} > t_k^{{u_s},trip}$ is satisfied, and vice versa. Then, the number of SUEs offloaded to the cloud can be calculated by
\begin{align}\label{eq:satellite_user_cloud_number_calculation}
{K^{{u_s},c}}  = \sum\limits_{k = 1}^{{K^s}} {\Gamma(t_k^{{u_s},comp,local} > t_k^{{u_s},trip})}.
\end{align}

Finally, we turn back to subproblem ${{\cal P}^{{u_s},delay}}$, and propose Algorithm \ref{alg:optimal_task_partitioning_satellite} to calculated the optimal task partitioning strategy of SUEs. In Algorithm \ref{alg:optimal_task_partitioning_satellite}, we first calculate the task completion delay $t_k^{{u_s},total,local}$ and the local computing delay $t_{n,k}^{{u_s},comp,local}$ for each task. Then, we determine the task partitioning strategy of each task by comparing the local computing delay $t_{n,k}^{{u_s},comp,local}$  with the propagation delay $t_k^{{u_s},trip}$. If $t_k^{{u_s},comp,local} \le t_k^{{u_s},trip}$, the task partitioning strategy of task $Q_k^{{u_s}}$ is given by $\alpha _k^{{u_s}*} = 1,\kappa _k^{{u_s}*} = 0$. If $t_k^{{u_s},comp,local} > t_k^{{u_s},trip}$, the task partitioning strategy of task $Q_k^{{u_s}}$ is given by $\alpha _k^{{u_s}*} = \alpha _k^{{u_s},cloud},\kappa _k^{{u_s}*} = \kappa _k^{{u_s},cloud}$.
Note that the complexity for calculating the task partitioning strategy and task completion delay is $\mathcal{O}(1)$ in one iteration. The overall complexity for executing Algorithm \ref{alg:optimal_task_partitioning_satellite} is given by $\mathcal{O}({K^\textrm{S}})$.

\begin{algorithm}[thb]
\caption{Optimal Task Partitioning Strategy of SUEs} \label{alg:optimal_task_partitioning_satellite}
\begin{algorithmic}[1]
\begin{spacing}{1.0}
\STATE Initialize ${K^{{u_s},c}} = 0$
\FOR{$k= 1$ to $K^\textrm{S}$}
  \STATE Calculate $t_k^{{u_s},total,local},  t_k^{{u_s},comp,local}$ according to (\ref{eq:satellite_user_task_completion_delay_local})
  \STATE Update ${K^{{u_s},c}} = {K^{{u_s},c}} + \Gamma(t_k^{{u_s},comp,local} > t_k^{{u_s},trip})$
\ENDFOR
\FOR{$k= 1$ to $K^\textrm{S}$}
  \IF{$t_k^{{u_s},comp,local} \le t_k^{{u_s},trip}$}
    \STATE Update the task partitioning strategy by $\alpha _k^{{u_s}*} = 1,\kappa _k^{{u_s}*} = 0$
    \STATE Update the task completion delay by $t_k^{{u_s},total*} = t_k^{{u_s},total,local}$
  \ELSE
    \STATE Update $\alpha _k^{{u_s}*} = \alpha _k^{{u_s},cloud},\kappa _k^{{u_s}*} = \kappa _k^{{u_s},cloud}$ according to (\ref{eq:satellite_user_task_partitioning_ratio_local_cloud})
    \STATE Calculate the task completion delay by $t_k^{{u_s},total*} = t_k^{{u_s},total,cloud}$ according to (\ref{eq:satellite_user_task_completion_delay_local_cloud})
  \ENDIF
\ENDFOR
\vspace{-5mm}
\end{spacing}
\end{algorithmic}
\end{algorithm}
\vspace{-0mm}

\section{Joint Subframe Allocation and Task Partitioning Strategy}\label{sec:6}
For any given subframe allocation strategy, which is represented by ${{{\bm{\tau }}^{{u_b}}},{{\bm{\theta }}^{{u_b},rad}},{\tau ^b},{\tau ^{{u_s}}},{\theta ^{{u_s},rad}}}$, the task partitioning strategies of TUEs and SUEs can be calculated based on Algorithm \ref{alg:optimal_task_partitioning_BS} and Algorithm \ref{alg:optimal_task_partitioning_satellite}. In this perspective, the objective function of problem $\cal P$ can be simplified as the function of the subframe allocation strategy ${U^{total}}({{\bm{\tau }}^{{u_b}}},{{\bm{\theta }}^{{u_b},rad}},{\tau ^b},{\tau ^{{u_s}}},{\theta ^{{u_s},rad}})$. Then, the original optimization problem $\cal P$ in (\ref{eq:optimization_problem}) is equivalent to
\begin{align}\label{eq:optimization_problem_equivalent}
&{{\cal P}^{equ}}:
\mathop {\max }\limits_{{{\bm{\tau }}^{{u_b}}},{{\bm{\theta }}^{{u_b},rad}},{\tau ^b},{\tau ^{{u_s}}},{\theta ^{{u_s},rad}}} {U^{total}}({{\bm{\tau }}^{{u_b}}},{{\bm{\theta }}^{{u_b},rad}},{\tau ^b},{\tau ^{{u_s}}},{\theta ^{{u_s},rad}}),  \\
&\hspace{9mm} s.t.\;\tau _n^{{u_b}} + \theta _n^{{u_b},rad} = 1,\tau _n^{{u_b}},\theta _n^{{u_b},rad} > 0,\forall n, \nonumber \\
&\hspace{9mm} \;\;\;\;\;\; {\tau ^b} + {\tau ^{{u_s}}} + {\theta ^{{u_s},rad}} = 1,{\tau ^b},{\tau ^{{u_s}}},{\theta ^{{u_s},rad}} > 0. \nonumber
\end{align}

\subsection{Particle Swarm Optimization}

With respect to the equivalently transformed problem ${{\cal P}^{equ}}$, the particle swarm optimization (PSO) method is then utilized to calculate the joint subframe allocation and task partitioning strategy. Derived from the social interactions among birds, the PSO method is efficient for solving complex optimization problems with widespread applications \cite{Xie2020Energy7734,Luo2022Minimizing2897}. The effectiveness of the PSO algorithm has been verified by plenty of works and experiments \cite{Chantre2018Multi2304,Huang2022TMA1652}.

When solving problem ${{\cal P}^{equ}}$ by the PSO method, the possible solutions of problem ${{\cal P}^{equ}}$ are modeled as particles in the search space. For particle $l$, the position ${{\bf{x}}_l}$ is the projection of the subframe allocation strategy onto the search space, given by
\begin{align}\label{eq:position_vector}
&{{\bf{x}}_l} = [{\bf{x}}_l^{{\tau ^{{u_b}}}},{\bf{x}}_l^{{\theta ^{{u_b},rad}}},x_l^{{\tau ^b}},x_l^{{\tau ^{{u_s}}}},x_l^{{\theta ^{{u_s},rad}}}],\\
&{\bf{x}}_l^{{\tau ^{{u_b}}}} = [x_{l,1}^{{\tau ^{{u_b}}}},...,x_{l,n}^{{\tau ^{{u_b}}}},...,x_{l,N}^{{\tau ^{{u_b}}}}], \nonumber \\
&{\bf{x}}_l^{{\theta ^{{u_b}}}} = [x_{l,1}^{{\theta ^{{u_b},rad}}},...,x_{l,n}^{{\theta ^{{u_b},rad}}},...,x_{l,N}^{{\theta ^{{u_b},rad}}}], \nonumber
\end{align}
where ${\bf{x}}_l^{{\tau ^{{u_b}}}}$ is the projection of the communication subframe allocation strategy $\bm{\tau} ^{{u_b}}$, and ${\bf{x}}_l^{{\theta ^{{u_b}}}}$ is the projection of the sensing subframe allocation strategy $\bm{\theta} ^{{u_b}}$. Also, $x_l^{{\tau ^b}},x_l^{{\tau ^{{u_s}}}},x_l^{{\theta ^{{u_s},rad}}}$ are the projection of the subframe allocation strategy ${{\tau ^b},{\tau ^{{u_s}}},{\theta ^{{u_s},rad}}}$.

To characterize the movement of particles in the search space, each particle $l$ is also assigned a velocity attribute ${{\bf{v}}_l}$, given by
\begin{align}\label{eq:velocity_vector}
&{{\bf{v}}_l} = [{\bf{v}}_l^{{\tau ^{{u_b}}}},{\bf{v}}_l^{{\theta ^{{u_b},rad}}},v_l^{{\tau ^b}},v_l^{{\tau ^{{u_s}}}},v_l^{{\theta ^{{u_s},rad}}}], \\
&{\bf{v}}_l^{{\tau ^{{u_b}}}} = [v_{l,1}^{{\tau ^{{u_b}}}},...,v_{l,n}^{{\tau ^{{u_b}}}},...,v_{l,N}^{{\tau ^{{u_b}}}}], \nonumber\\
&{\bf{v}}_l^{{\theta ^{{u_b}}}} = [v_{l,1}^{{\theta ^{{u_b},rad}}},...,v_{l,n}^{{\theta ^{{u_b},rad}}},...,v_{l,N}^{{\theta ^{{u_b},rad}}}]. \nonumber
\end{align}
The velocity attribute ${{\bf{v}}_l}$ characterizes the moving speed in each dimension of the search space, corresponding to the variation rate of the subframe allocation strategy.

In order to find the optimal solution, each particle will move iteratively in the search space based on its own experience, and will also learn from other particles in a cooperative manner. For particle $l$, the updating process in the $\delta$-th iteration is defined as follows:
\begin{align}\label{eq:particle_update}
&{{\bf{v}}_l}(\delta  + 1) = \omega {{\bf{v}}_l}(\delta ) + {a_1}{z_1}[{\bm{\rho}} _l^{best}(\delta ) - {{\bf{x}}_l}(\delta )]\\
&\hspace{25.2mm} + {a_2}{z_2}[{{\bm{\rho}} ^{gbest}}(\delta ) - {{\bf{x}}_l}(\delta )], \nonumber\\
&{{\bf{x}}_l}(\delta  + 1) = {{\bf{x}}_l}(\delta ) + {{\bf{v}}_l}(\delta  + 1), \nonumber
\end{align}
in which $\omega$ denotes the inertia weight, $a_1, a_2$ denote the learning factors, and ${z_1},{z_2}$ denote two random functions obeying uniform distribution in $[0,1]$. The constriction coefficient can also be utilized to guarantee the convergence of population and prevent explosion of the velocity \cite{Clerc2002particle58}. More detailed convergence analysis can be found in \cite{Poli2007Particle33}. Also, ${\bm{\rho}} _l^{best}$ denotes the personal best solution that particle $l$ has experienced during the iteration, and ${\bm{\rho}}^{gbest}$ denotes the global best solution selected from all personal best solutions.

In the iteration, the fitness of position ${{\bf{x}}_l}$ is evaluated by the value of the objective function, represented by ${U^{total}}({{\bf{x}}_l})$.
Then, the updating process of the personal best solution and the global best solution is defined as follows:
\begin{align}\label{eq:best_solution_update}
&{\bm{\rho}} _l^{best}(\delta  + 1) = \left\{ {\begin{array}{*{20}{l}}
{{\bm{\rho}} _l^{best}(\delta ),
\hspace{2.8mm} {U^{total}}({{\bf{x}}_l}(\delta  + 1)) \le {U^{total}}({\bm{\rho}} _l^{best}(\delta )),}\\
{{{\bf{x}}_l}(\delta  + 1),{U^{total}}({{\bf{x}}_l}(\delta  + 1)) > {U^{total}}({\bm{\rho}} _l^{best}(\delta )),}
\end{array}} \right. \\
&{{\bm{\rho}} ^{gbest}}(\delta  + 1) = \arg \mathop {\max }\limits_{{\bm{\rho}} _l^{best}} {U^{total}}({\bm{\rho}} _l^{best}(\delta  + 1)). \nonumber
\end{align}

\subsection{Joint Subframe Allocation and Task Partitioning Strategy}

Based on the structure of the PSO method discussed above, the joint subframe allocation and task partitioning strategy is illustrated in Algorithm \ref{alg:joint}.
First, we generate a group of particles of population $L$ within the search space. During the iteration process, each particle will learn and move according to the updating function in (\ref{eq:particle_update}). Note that the particles should always be maintained within the search space, which is restricted by the constraints in (\ref{eq:optimization_problem_equivalent}). For the new position obtained, the fitness is evaluated by calculating the value of the objective function ${U^{total}}({{\bf{x}}_l}(\delta  + 1))$, based on which the personal best solution ${\bm{\rho}} _l^{best}(\delta  + 1)$ and the global best solution ${{\bm{\rho}} ^{gbest}}(\delta  + 1)$ will be updated. Finally, we set the subframe allocation strategy as ${{\bm{\tau }}^{{u_b}*}},{{\bm{\theta }}^{{u_b},rad*}},{\tau ^{b*}},{\tau ^{{u_s}*}},{\theta ^{{u_s},rad*}} = {{\bm{\rho}} ^{gbest}}({\delta _{max}})$, while the task partitioning strategy of TUEs and SUEs can be calculated with Algorithm \ref{alg:optimal_task_partitioning_BS} and Algorithm \ref{alg:optimal_task_partitioning_satellite}.
The complexity of Algorithm \ref{alg:joint} mainly comes from calculating the task partitioning strategy with Algorithm \ref{alg:optimal_task_partitioning_BS} and Algorithm \ref{alg:optimal_task_partitioning_satellite} in each iteration. The overall complexity for executing Algorithm \ref{alg:joint} is given by $\mathcal{O}(\delta_{max} L K)$.

\begin{algorithm}[hb]
\caption{Joint Subframe Allocation and Task Partitioning Strategy} \label{alg:joint}
\begin{algorithmic}[1]
\begin{spacing}{1}
\STATE Generate the particle swarm of population $L$
\FOR{$\delta = 1,2,..., \delta_{max}$}
  \FOR {particle $l=1,2,...,L$}
    \STATE Calculate the velocity ${{\bf{v}}_{l}}(\delta  + 1)$ and position ${{\bf{x}}_{l}}(\delta  + 1)$ of particle $l$ according to (\ref{eq:particle_update})
    \IF{${{\bf{x}}_{l}}(\delta  + 1)$ is out of the search space}
      \STATE Set ${{\bf{x}}_{l}}(\delta  + 1) = {\bf{x}}_{l}^{bound}(\delta  + 1)$
    \ENDIF
    \STATE Set the subframe allocation strategy as ${{\bm{\tau }}^{{u_b}}},{{\bm{\theta }}^{{u_b},rad}},{\tau ^{b}},{\tau ^{{u_s}}},{\theta ^{{u_s},rad}} = {{\bf{x}}_{l}}(\delta  + 1)$
    \STATE Set TUE task partitioning strategy by Algorithm \ref{alg:optimal_task_partitioning_BS}
    \STATE Set SUE task partitioning strategy by Algorithm \ref{alg:optimal_task_partitioning_satellite}
    \STATE Evaluate the fitness of the new position by calculating ${U^{total}}({{\bf{x}}_l}(\delta  + 1))$
    \STATE Calculate the personal best solution ${\bm{\rho}} _l^{best}(\delta  + 1)$ by (\ref{eq:best_solution_update})
  \ENDFOR
  \STATE Calculate the global best solution ${{\bm{\rho}} ^{gbest}}(\delta  + 1)$ by (\ref{eq:best_solution_update})
\ENDFOR
\STATE Set the subframe allocation strategy as ${{\bm{\tau }}^{{u_b}*}},{{\bm{\theta }}^{{u_b},rad*}},{\tau ^{b*}},{\tau ^{{u_s}*}},{\theta ^{{u_s},rad*}} = {{\bm{\rho}} ^{gbest}}({\delta _{max}})$
\STATE Set TUE task partitioning strategy by Algorithm \ref{alg:optimal_task_partitioning_BS}
\STATE Set SUE task partitioning strategy by Algorithm \ref{alg:optimal_task_partitioning_satellite}
\vspace{-5mm}
\end{spacing}
\end{algorithmic}
\end{algorithm}
\vspace{-0mm}

\section{Simulation Results}\label{sec:7}

\subsection{Simulation Setup}

In the simulation, the following settings are applied as the default value of the parameters. For the network model, the satellite is considered to be on the orbit of 8,000 km, the number of SUEs is set to be $K^\textrm{S}=$ 30, and the number of the BSs is set to be $N=$ 10. For each BS, the number of TUEs is set as $K_n^b =$ 5, which are randomly distributed within the cell radius of 500 m.
For the communication and sensing model, the carrier frequency of 28 GHz is considered for both the BS and the satellite. The terrestrial communication channel is modeled referring to \cite{Zhang2022Time2206}, the satellite communication channel is modeled referring to \cite{Jia2021Joint1147,Chr2015Multicast4695}, and the sensing channel is modeled referring to \cite{Zhang2021Design1500}. For the computing model, the data amount of each computation task is generated by uniform distribution with $D_{n,k}^{{u_b}}\in$ [100,~900] Kb and $D_{k}^{{u_s}}\in$ [100,~900] Kb, while the computation workload is set as $C_{n,k}^{{u_b}} = $ 1,000 cycle/bit and $C_{k}^{{u_s}} = $ 1,000 cycle/bit. Other key parameters are summarized in Table \ref{tab:parameter}.

\begin{table}[t]\footnotesize
  \centering
  \begin{spacing}{1.3}
  \caption{Simulation parameters.}\label{tab:parameter}
  \begin{tabular}{|m{4.5cm}<{\centering}|m{2.5cm}<{\centering}|}
  \hline
  \textbf{Parameter}  &  \textbf{Value}  \\
  \hline
  Terrestrial bandwidth $B_1$    &  20 MHz  \\
  \hline
  Satellite   bandwidth $B_2$    &  40 MHz \\
  \hline
  Terrestrial frame length  $T_1$  &  10 ms \\
  \hline
  Satellite   frame length  $T_2$  &  10 ms \\
  \hline
  User transmission power $p_{n,k}^{u_b}$	& 30 dBm \\
  \hline
  User transmission power $p_{k}^{u_s}$	    & 30 dBm \\
  \hline
  BS transmission power $p_{n}^b$	& 40 dBm     \\
  \hline
  Noise power density   $N_0$       &-174 dBm/Hz \\
  \hline
  Terrestrial antenna gain $G^t$/$G^r$ & 18 dB \\
  \hline
  Satellite   antenna gain $G^t$/$G^r$ & 40 dB \\
  \hline
  Task data amount $D_{n,k}^{u_b}$	& [100,~900] Kb \\
  \hline
  Task data amount $D_{k}^{u_s}$	    & [100,~900] Kb \\
  \hline
  Task  Workload $C_{n,k}^{u_b}$ & 1,000 cycle/bit \\
  \hline
  Task  Workload $C_{k}^{u_s}$   & 1,000 cycle/bit \\
  \hline
  User computing frequency $f_{n,k}^{u_b}$	& $5 \times 10^8$ cycle/s \\
  \hline
  User computing frequency $f_{k}^{u_s}$	& $5 \times 10^8$ cycle/s \\
  \hline
  BS computing frequency $F_b$	& $5 \times 10^9$ cycle/s \\
  \hline
  Propagation delay $t^{cloud}$ & 100 ms \\
  \hline
  \end{tabular}
  \end{spacing}
\end{table}

The following optimization strategies are implemented for comparisons in the simulation:
\begin{enumerate}[~~1.]
\item \textbf{JSATPS:} This is the joint subframe allocation and task partitioning strategy (JSATPS) given by the proposed Algorithm \ref{alg:joint}.
\item \textbf{Greedy-OTPS:} The subframe allocation strategy is obtained by solving problem ${\cal P}^{equ}$ in a greedy manner with maximum additional utility \cite{Zhao2022Radio8675}, while the task partitioning strategies are calculated by the proposed Algorithm \ref{alg:optimal_task_partitioning_BS} and Algorithm \ref{alg:optimal_task_partitioning_satellite}.
\item \textbf{Greedy-Equal:} The subframe allocation strategy is obtained in a greedy manner with maximum additional utility, while the computation tasks are divided with equal size \cite{Liu2020Post3170}.
\item \textbf{Exhaustive:} The optimal solution is obtained by exhaustive search.
\end{enumerate}

\subsection{Performance Analysis}

\begin{figure*} [t]
\begin{center}
\begin{minipage}[t]{0.45 \linewidth}
\centering
\includegraphics*[width=3in]{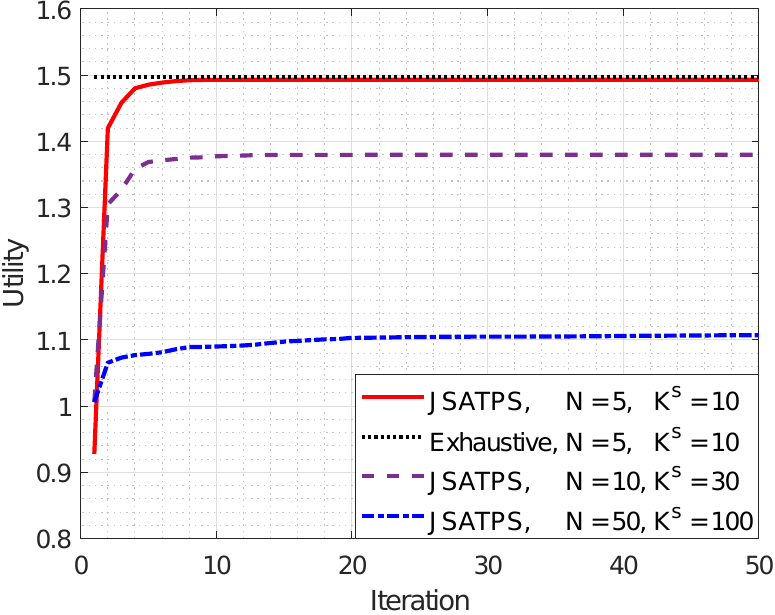}
\captionsetup{font={footnotesize}}
\caption{Convergence performance of Algorithm \ref{alg:joint}.}\label{fig:iteration}
\end{minipage}
\begin{minipage}[t]{0.45 \linewidth}
\centering
\includegraphics*[width=3in]{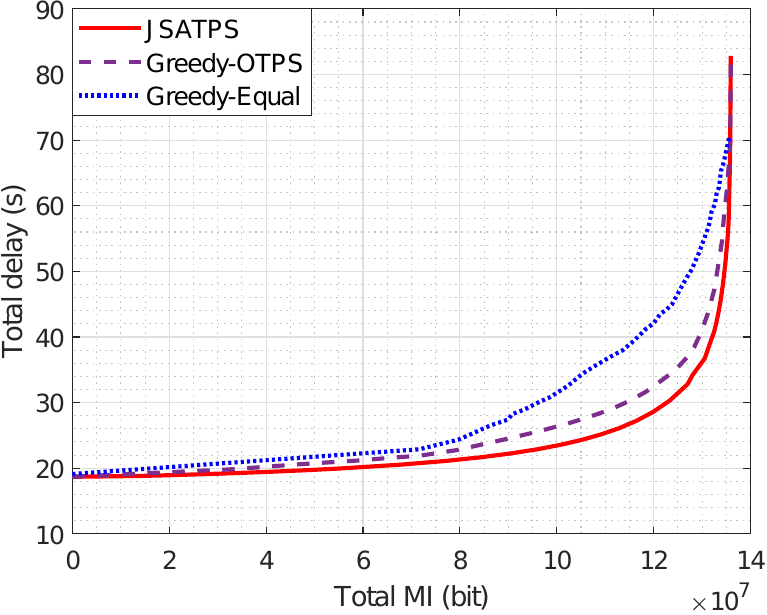}
\captionsetup{font={footnotesize}}
\caption{Trade-off between radar MI and task completion delay.}\label{fig:pareto}
\end{minipage}
\end{center}
\vspace{-0mm}
\end{figure*}

Fig. \ref{fig:iteration} gives the convergence performance of Algorithm \ref{alg:joint} when implementing the proposed JSATPS strategy for different network scales. Results show that Algorithm \ref{alg:joint} can achieve stable convergence performance under different network scales. Although the convergence speed may be slower for large scales, the utility function can generally converge within 50 iterations, validating the feasibility of the proposed strategy.
Also, to give a intuitive evaluation of the proposed strategy, we calculate the optimal solution by exhaustive search for the small case of $N =$ 5, $K^\textrm{S} =$ 10. It can be seen that the proposed JSATPS strategy can achieve near-optimal performance compared with the result obtained by exhaustive search, for which the performance loss is less than 1\%. The reliability of the proposed strategy is also validated.

Fig. \ref{fig:pareto} illustrates the trade-off between the radar MI and the task completion delay for the proposed strategy and the comparison strategies, in which the total MI refers to the total radar MI $I^{total}$, and the total delay refers to the total task completion delay $t^{total}$. As discussed in the problem formulation, we introduce the Cobb-Douglas utility function to characterize the multi-dimensional resource demands and multi-objective performance optimization of the network. By adjusting the weighting metric $\eta$, the Pareto optimal solutions can be obtained along the Pareto frontier \cite{Xu2023UAV}.
It can be seen that the Pareto frontier of the proposed JSATPS strategy outperforms the Greedy-OTPS strategy and the Greedy-Equal strategy, achieving larger radar MI and lower task completion delay. The superiority over the comparison strategies validates the effectiveness of the proposed strategy.
Also, since the two optimization objectives of the network conflict with each other, increasing the radar MI will lead to the increase of the task completion delay when moving along the Pareto frontier. For the JSATPS strategy, when the radar MI increases from 0 to $\mathrm{1.36 \times 10^8}$ bits, the task completion delay will increase from 18.7 s to 82.8 s. However, it can be seen that the marginal effect exists for the trade-off between the radar MI and the task completion delay. For the JSATPS strategy, if we increase the radar MI by 100 \% from $\mathrm{6 \times 10^7}$ bits to $\mathrm{1.2 \times 10^8}$ bits, more subframe resources will be allocated for sensing, and the task completion delay will increase by 40 \% from 20 s to 28 s. However, if we further allocate all the subframe resources for sensing, the radar MI only increases by 13 \% from $\mathrm{1.2 \times 10^8}$ bits to $\mathrm{1.36 \times 10^8}$ bits, while the task completion delay increases dramatically by near 200 \% from 28 s to 82.8 s. Due to the marginal effect, the performance gain of the radar MI is much more smaller than the performance loss of the task completion delay in this case. Similarly, in order to reduce the task completion delay by only 6.5 \% from 20 s to 18.7 s, all the subframe resources need to be allocated for communication, and the radar MI will decrease from $\mathrm{6 \times 10^7}$ bits to zero in this case. It can be seen that compared with the edge points on the Pareto frontier, it is more efficient to select the middle points when optimizing the network performance.

\begin{figure*} [tbp]
\begin{center}
\begin{minipage}[t]{0.45 \linewidth}
\centering
\includegraphics*[width=3in]{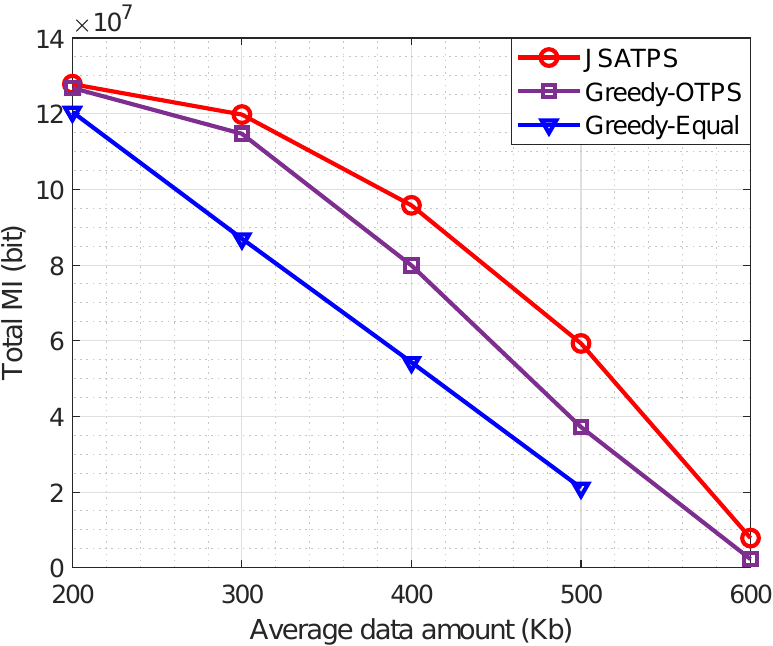}
\centerline{\footnotesize (a) $t^{total}= 20$ s.}
\end{minipage}
\begin{minipage}[t]{0.45 \linewidth}
\centering
\includegraphics*[width=3in]{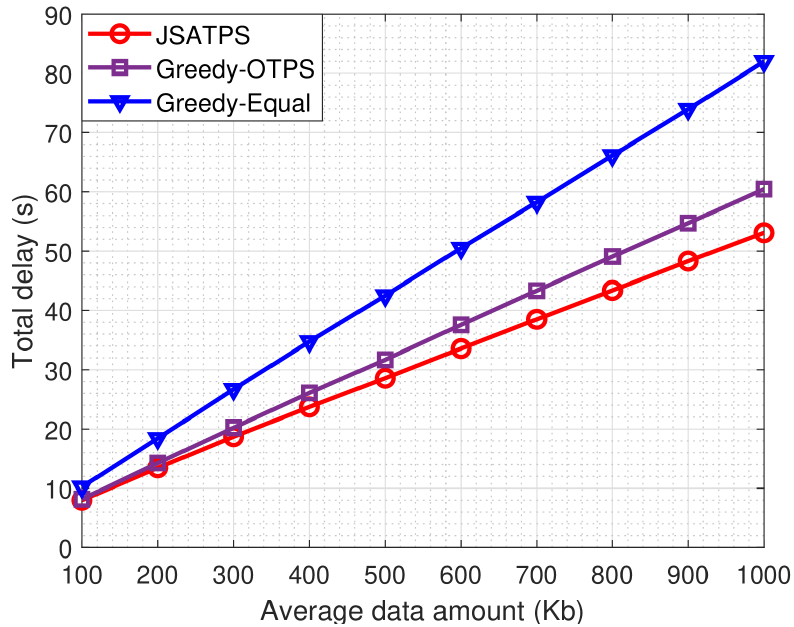}
\centerline{\footnotesize (b) $I^{total} = 1.2 \times 10^8$ bit.}
\end{minipage}
\captionsetup{font={footnotesize}}
\caption{Performance analysis of different average data amount.}\label{fig:data_amount}
\end{center}
\vspace{-0mm}
\end{figure*}

Fig. \ref{fig:data_amount} illustrates the performance analysis of different average data amount, in which one of the optimization objectives is fixed to give a more intuitive comparison. In Fig. \ref{fig:data_amount} (a), we investigate the achievable radar MI under different average data amount, when guaranteeing the task completion delay $t^{total}=$ 20 s. Note that the point $t^{total}=$ 20 s is not always on the Pareto frontier when the average data amount varies. Thus we only depict the feasible points on the Pareto frontier. It can be seen that the proposed JSATPS strategy can always achieve the largest radar MI for different cases. For average data amount of 500 Kb, the performance superiority of the proposed JSATPS strategy is 60 \% and 180 \% compared with the Greedy-OTPS strategy and the Greedy-Equal strategy. However, when the average data amount increases to as large as 600 Kb, it can be seen that the radar MI dramatically decreases to near zero. In this case, almost all the subframe resources are allocated for communication to guarantee the performance of the task completion delay. Also, note that the point $t^{total}=$ 20 s is not on the Pareto frontier of the Greedy-Equal strategy in this case. On the other hand, when the the average data amount is small enough, e.g. 200 Kb in the figure, local computing is almost sufficient to guarantee the task completion delay. Then, the subframe resources will be mainly allocated for sensing, and the Greedy-OTPS strategy can achieve similar performance compared with the JSATPS strategy in this case.
Similarly, in Fig. \ref{fig:data_amount} (b), we investigate the achievable task completion delay under different average data amount, when guaranteeing the radar MI $I^{total} = 1.2 \times 10^8$  bits. Results also show that the proposed JSATPS strategy is always superior to the comparison strategies. For average data amount of 500 Kb, the performance superiority of the proposed JSATPS strategy is 11 \% and 50 \% compared with the Greedy-OTPS strategy and the Greedy-Equal strategy. It can be seen that the effectiveness of the proposed strategy is validated in terms of both the radar MI and the task completion delay.

Fig. \ref{fig:orbit} illustrates the performance analysis of different orbit altitudes for the proposed JSATPS strategy. In Fig. \ref{fig:orbit} (a), the y axis at the left gives the achievable radar MI when guaranteeing the task completion delay $t^{total}=$ 20 s. It can be seen that the achievable radar MI decreases by 87 \% from $1.1 \times 10^8$ bits to $1.4 \times 10^7$ bits when the orbit altitude increases from 1,000 km to 10,000 km. Note that the point $t^{total}=$ 20 s is not on the Pareto frontier for orbit altitudes higher than 10,000 km. Similarly, the axis at the right gives the achievable task completion delay when guaranteeing the radar MI $I^{total} = 6 \times 10^7$  bits. The achievable task completion delay increases by 164 \% from 13 s to 34.3 s when the orbit altitude increases from 1,000 km to 36,000 km. Results indicate that the orbit altitude in the integrated satellite-terrestrial network has significantly impact on the network performance. In current satellite constellation projects, LEO satellites of low orbit altitude are generally more preferable to improve the network performance.

\begin{figure*} [tbp]
\begin{center}
\begin{minipage}[t]{0.33 \linewidth}
\centering
\includegraphics*[width=2.4in]{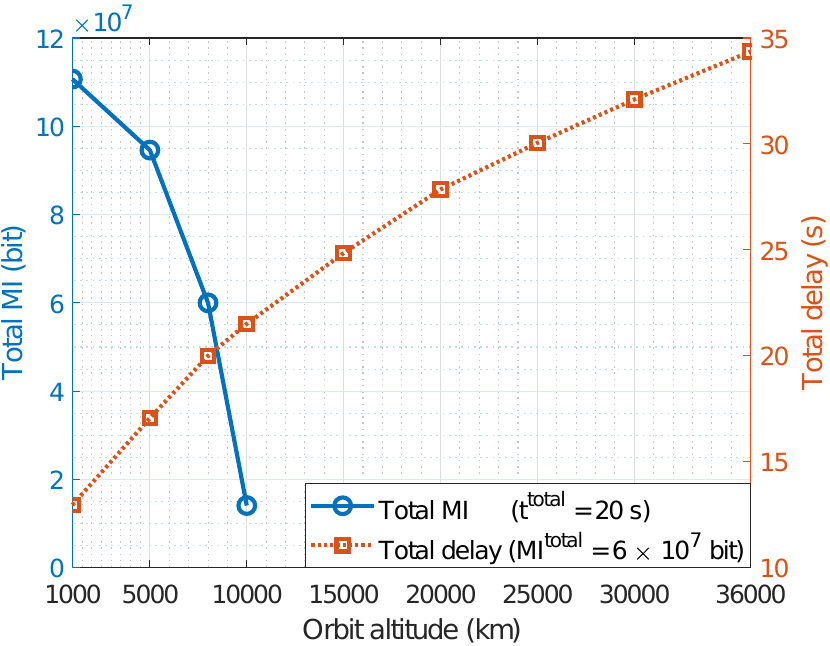}
\centerline{\footnotesize (a) Network performance analysis.}
\vspace{2mm}
\end{minipage}
\begin{minipage}[t]{0.33 \linewidth}
\centering
\includegraphics*[width=2.3in]{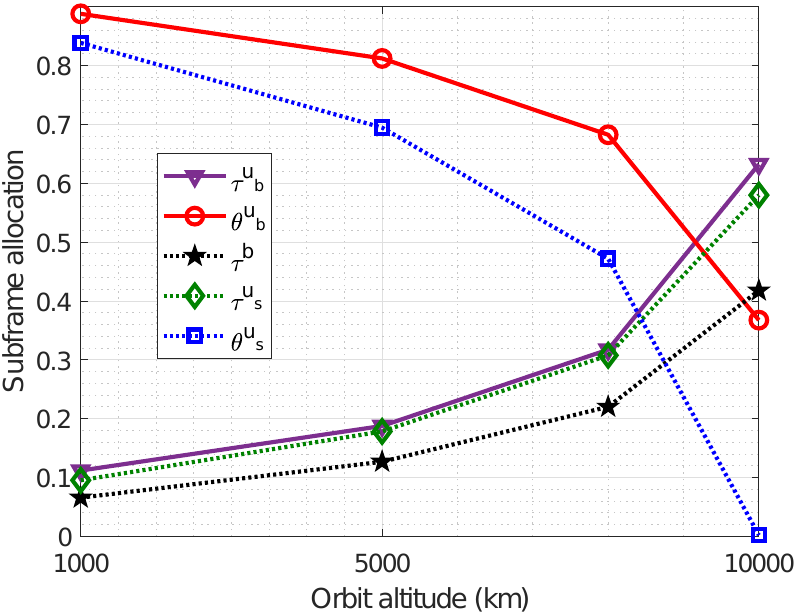}
\centerline{\footnotesize (b) Subframe allocation when $t^{total}= 20$ s.}
\end{minipage}
\begin{minipage}[t]{0.33 \linewidth}
\centering
\includegraphics*[width=2.3in]{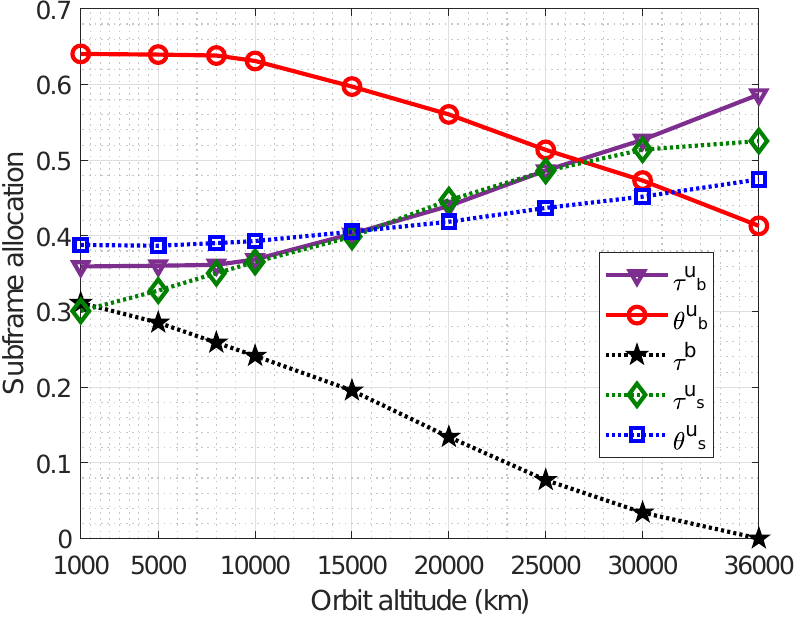}
\centerline{\footnotesize (c) Subframe allocation when $I^{total} = 6 \times 10^7$ bit.}
\end{minipage}
\vspace{-0mm}
\captionsetup{font={footnotesize}}
\caption{Performance analysis of different orbit altitudes.}\label{fig:orbit}
\end{center}
\vspace{-0mm}
\end{figure*}

In Fig. \ref{fig:orbit} (b), we investigate the subframe allocation strategy under different orbit altitudes, when guaranteeing the task completion delay $t^{total}=$ 20 s. For ease of depiction, the subframe allocation strategy of the BS is averaged by the $N$ BSs. First, we focus on the orbit altitude of 1,000 km, which is typical for LEO satellites. It can be seen that since the propagation delay of satellite-terrestrial links is relatively small, the performance of the task completion delay can be guaranteed with less communication resources. In this case, the communication subframe allocation strategy is given by $\tau^{u_b} =$ 0.11, $\tau^b =$ 0.07, $\tau^{u_s} =$ 0.10. Then, most of the subframe resources can be allocated for sensing, given by $\theta^{u_b,rad} =$ 0.89, $\theta^{u_s, rad} =$ 0.83. However, if the satellite is replaced by  MEO satellites on the orbit of 10,000 km, the propagation delay of satellite-terrestrial links will bring high burden to the task completion delay. As shown in the figure, much more subframe resources need to be allocated for communication to guarantee the performance of the task completion delay. The subframe allocated for communication will increases to $\tau^{u_b} =$ 0.63, $\tau^b =$ 0.42, $\tau^{u_s} =$ 0.58, while the subframe allocated for sensing decreases to $\theta^{u_b,rad} =$ 0.37, $\theta^{u_s, rad} =$ 0. It can be seen that the satellite subframe is completely allocated for communication in this case, since the communication burden is mainly brought by the satellite-terrestrial communication links of high altitude.

Then, in Fig. \ref{fig:orbit} (c), we investigate the subframe allocation strategy under different orbit altitudes, when guaranteeing the radar MI $I^{total} = 6 \times 10^7$ bits. Different from Fig. \ref{fig:orbit} (b), it can be seen that the communication subframe $\tau^b$ allocated to the BSs decreases as the orbit altitude increases. As discussed in Section \ref{sec:4}, whether the computation tasks of TUEs are offloaded to the cloud is determined by comparing the edge computing delay $t_{n,k}^{b,comp,edge}$ with the propagation delay $t_n^{b,trip}$. Since higher orbit altitude means higher propagation delay, less TUEs will be offloaded to the cloud. In order to maximize the network utility, it is more efficient to improve the edge computing performance of TUEs by increasing $\tau^{u_b}$ and decreasing $\theta^{u_b,rad}$. Then, the sensing subframe $\theta^{u_s,rad}$ allocated to SUEs needs to be increased to guarantee the performance of the radar MI. Also, note that the computation tasks of SUEs can only be executed locally or in the cloud. Due to the limited computing resources of users, it is still preferable to allocate more communication resources $\tau^{u_s}$ to SUEs when the orbit altitude increases.

\section{Conclusion}\label{sec:8}

In this paper, we have investigated the time-division based integrated sensing, communication, and computing in integrated satellite-terrestrial networks. Based on the proposed ISCC framework, we have formulated the joint subframe allocation and task partitioning problem to maximize the radar MI and minimize the task completion delay.
Based on the partial offloading model, the original optimization problem has been decomposed into the TUE task partitioning subproblem and the SUE task partitioning subproblem. The closed-form solutions have been obtained for both subproblems with theoretical derivations.
Then, we have developed the joint subframe allocation and task partitioning strategy to optimize the overall performance of the network. By leveraging the coordination of different network components, the sensing, communication, and computing can be efficiently integrated and managed in the satellite-terrestrial network.
The simulations results have demonstrated the effectiveness and superiority of the proposed strategies. Also, we have analyzed the trade-off between the radar MI and the task completion delay based on the Pareto frontier. It is expected that the results of the marginal effect, achievable performance, and resource utilization can help to provide constructive guideness for the network design.
More advanced physical layer techniques and artificial intelligence algorithms can be further explored to enhance the system performance in future works.

\section*{CRediT authorship contribution statement}
\textbf{Xiangming Zhu:} Methodology, Software, Investigation, Writing - Original Draft. \textbf{Hua Wang:} Conceptualization, Project Administration, Writing - Reviewing \& Editing. \textbf{Zhaohui Yang:} Supervision, Writing - Reviewing \& Editing. \textbf{Quoc-Viet Pham:} Writing - Reviewing \& Editing.

\section*{Acknowledgments}
This work was supported by the National Natural Science Foundation of China under Grant 62201513 and Grant 62271452.

\section*{Appendix}

\subsection*{A. Proof of Theorem \ref{theorem:task_partitioning_optimality}}\label{sec:App_1}

Let $\{ \alpha _{n,k}^{{u_b},cloud},\beta _{n,k}^{{u_b},cloud},\kappa _{n,k}^{{u_b},cloud}\} $ be the task partitioning strategy when the boundary condition is satisfied, and $t_{n,k}^{{u_b},total,cloud} = t_{n,k}^{{u_b},u,cloud} = t_{n,k}^{{u_b},b,cloud} = t_{n,k}^{{u_b},c,cloud}$ be the corresponding task completion delay. For task partitioning strategy $\{ \alpha _{n,k}^{{u_b}},\beta _{n,k}^{{u_b}},\kappa _{n,k}^{{u_b}}\}$, if the boundary condition is not satisfied, we have
\begin{align}\label{eq:Theorem1_boundary}
t_{n,k}^{{u_b},total} = \max \{ t_{n,k}^{{u_b},u},t_{n,k}^{{u_b},b},t_{n,k}^{{u_b},c}\}  > \min \{ t_{n,k}^{{u_b},u},t_{n,k}^{{u_b},b},t_{n,k}^{{u_b},c}\}.
\end{align}
If we can prove $t_{n,k}^{{u_b},total} > t_{n,k}^{{u_b},total,cloud}$, Theorem \ref{theorem:task_partitioning_optimality} is proved naturally.

We first assume $t_{n,k}^{{u_b},total} \le t_{n,k}^{{u_b},total,cloud}$, the following inequalities will hold in this case.
\begin{align}\label{eq:Theorem1_inequality}
&t_{n,k}^{{u_b},u} \le t_{n,k}^{{u_b},total} \le t_{n,k}^{{u_b},total,cloud} = t_{n,k}^{{u_b},u,cloud}\\
&t_{n,k}^{{u_b},b} \le t_{n,k}^{{u_b},total} \le t_{n,k}^{{u_b},total,cloud} = t_{n,k}^{{u_b},b,cloud} \nonumber \\
&t_{n,k}^{{u_b},c} \le t_{n,k}^{{u_b},total} \le t_{n,k}^{{u_b},total,cloud} = t_{n,k}^{{u_b},c,cloud}. \nonumber
\end{align}
Then, the following conclusions can be obtained.
\begin{itemize}
\item $\alpha _{n,k}^{{u_b}}$: The inequality $t_{n,k}^{{u_b},u} \le t_{n,k}^{{u_b},u,cloud}$ can be expressed as
    \begin{align}\label{eq:Theorem1_alpha}
    \frac{{\alpha _{n,k}^{{u_b}}D_{n,k}^{{u_b}}C_{n,k}^{{u_b}}}}{{f_{n,k}^{{u_b}}}} \le \frac{{\alpha _{n,k}^{{u_b},cloud}D_{n,k}^{{u_b}}C_{n,k}^{{u_b}}}}{{f_{n,k}^{{u_b}}}}.
    \end{align}
    We can conclude that $\alpha _{n,k}^{{u_b}} \le \alpha _{n,k}^{{u_b},cloud}$. Note that $\alpha _{n,k}^{{u_b}} = \alpha _{n,k}^{{u_b},cloud}$ only when $t_{n,k}^{{u_b},u} = t_{n,k}^{{u_b},u,cloud} = t_{n,k}^{{u_b},total,cloud}$.

\item $\beta _{n,k}^{{u_b}}$: The inequality $t_{n,k}^{{u_b},b} \le t_{n,k}^{{u_b},b,cloud}$ can be expressed as
    \begin{align}\label{eq:Theorem1_beta}
    &\frac{{(1 - \alpha _{n,k}^{{u_b}})D_{n,k}^{{u_b}}K_n^b}}{{R_{n,k}^{{u_b}}\tau _n^{{u_b}}}} + \frac{{\beta _{n,k}^{{u_b}}D_{n,k}^{{u_b}}C_{n,k}^{{u_b}}K_n^b}}{{{F^b}}} \\
    & \le \frac{{(1 - \alpha _{n,k}^{{u_b},cloud})D_{n,k}^{{u_b}}K_n^b}}{{R_{n,k}^{{u_b}}\tau _n^{{u_b}}}} + \frac{{\beta _{n,k}^{{u_b},cloud}D_{n,k}^{{u_b}}C_{n,k}^{{u_b}}K_n^b}}{{{F^b}}}. \nonumber
    \end{align}
    Since we have proved $\alpha _{n,k}^{{u_b}} \le \alpha _{n,k}^{{u_b},cloud}$, we can conclude that $\beta _{n,k}^{{u_b}} \le \beta _{n,k}^{{u_b},cloud}$. Note that $\beta _{n,k}^{{u_b}} = \beta _{n,k}^{{u_b},cloud}$ only when $\alpha _{n,k}^{{u_b}} = \alpha _{n,k}^{{u_b},cloud},t_{n,k}^{{u_b},b} = t_{n,k}^{{u_b},b,cloud} = t_{n,k}^{{u_b},total,cloud}$.

\item $\kappa _{n,k}^{{u_b}}$: We have proved that $\alpha _{n,k}^{{u_b}} \le \alpha _{n,k}^{{u_b},cloud},\beta _{n,k}^{{u_b}} \le \beta _{n,k}^{{u_b},cloud}$. Based on the constraint $\alpha _{n,k}^{{u_b}} + \beta _{n,k}^{{u_b}} + \kappa _{n,k}^{{u_b}} = 1$, we have
    \begin{align}\label{eq:Theorem1_kappa}
    \kappa _{n,k}^{{u_b}} & = 1 - \alpha _{n,k}^{{u_b}} - \beta _{n,k}^{{u_b}} \\
    & \ge 1 - \alpha _{n,k}^{{u_b},cloud} - \beta _{n,k}^{{u_b},cloud} = \kappa _{n,k}^{{u_b},cloud} > 0. \nonumber
    \end{align}
    We can conclude that task $Q_{n,k}^{{u_b}}$ is offloaded to the cloud.

\item $t_{n,k}^{{u_b},c}$: We have proved that $\alpha _{n,k}^{{u_b}} \le \alpha _{n,k}^{{u_b},cloud},\beta _{n,k}^{{u_b}} \le \beta _{n,k}^{{u_b},cloud},$ $\kappa _{n,k}^{{u_b}} > 0$. Then, we have
    \begin{align}\label{eq:Theorem1_t}
    t_{n,k}^{{u_b},c} &= \frac{{(1 - \alpha _{n,k}^{{u_b}})D_{n,k}^{{u_b}}K_n^b}}{{R_{n,k}^{{u_b}}\tau _n^{{u_b}}}} \\
    &\;\;\; + \frac{{(1 - \alpha _{n,k}^{{u_b}} - \beta _{n,k}^{{u_b}})D_{n,k}^{{u_b}}{K^{{u_b},c}}}}{{R_n^b{\tau ^b}}} + t_n^{b,trip} \nonumber\\
    &\ge \frac{{(1 - \alpha _{n,k}^{{u_b},cloud})D_{n,k}^{{u_b}}K_n^b}}{{R_{n,k}^{{u_b}}\tau _n^{{u_b}}}} \nonumber \\
    &\;\;\; + \frac{{(1 - \alpha _{n,k}^{{u_b},cloud} - \beta _{n,k}^{{u_b},cloud})D_{n,k}^{{u_b}}{K^{{u_b},c}}}}{{R_n^b{\tau ^b}}} + t_n^{b,trip} \nonumber \\
    & = t_{n,k}^{{u_b},c,cloud} = t_{n,k}^{{u_b},total,cloud}. \nonumber
    \end{align}
    Note that $t_{n,k}^{{u_b},c} = t_{n,k}^{{u_b},c,cloud} = t_{n,k}^{{u_b},total,cloud}$ only when $\alpha _{n,k}^{{u_b}} = \alpha _{n,k}^{{u_b},cloud},\beta _{n,k}^{{u_b}} = \beta _{n,k}^{{u_b},cloud}$.

\end{itemize}

In (\ref{eq:Theorem1_inequality}), the inequality $t_{n,k}^{{u_b},c} \le t_{n,k}^{{u_b},total,cloud}$ is derived based on the assumption $t_{n,k}^{{u_b},total} \le t_{n,k}^{{u_b},total,cloud}$. Also, we have proved $t_{n,k}^{{u_b},c} \ge t_{n,k}^{{u_b},total,cloud}$ in (\ref{eq:Theorem1_t}). Then, we can conclude that $t_{n,k}^{{u_b},c} = t_{n,k}^{{u_b},total,cloud}$, based on which we have $\alpha _{n,k}^{{u_b}} = \alpha _{n,k}^{{u_b},cloud},\beta _{n,k}^{{u_b}} = \beta _{n,k}^{{u_b},cloud}$ as analyzed in (\ref{eq:Theorem1_t}). Then, we can conclude $t_{n,k}^{{u_b},u} = t_{n,k}^{{u_b},total,cloud},t_{n,k}^{{u_b},b} = t_{n,k}^{{u_b},total,cloud}$ as analyzed in (\ref{eq:Theorem1_alpha}) and (\ref{eq:Theorem1_beta}). The conclusion $t_{n,k}^{{u_b},u} = t_{n,k}^{{u_b},b} = t_{n,k}^{{u_b},c} = t_{n,k}^{{u_b},total,cloud}$ is finally obtained. However, it conflicts with the assumption in (\ref{eq:Theorem1_boundary}). Thus, the inequality $t_{n,k}^{{u_b},total} > t_{n,k}^{{u_b},total,cloud}$ will always hold if the boundary condition $t_{n,k}^{{u_b},u} = t_{n,k}^{{u_b},b} = t_{n,k}^{{u_b},c}$ is not satisfied. Theorem \ref{theorem:task_partitioning_optimality} is proved.

\section*{References}

\end{document}